\documentclass[a4paper]{article}

\usepackage[utf8]{inputenc}
\usepackage[francais,american]{babel}
\usepackage{enumerate}
\usepackage{amsmath,amstext,amsfonts,amssymb,amsthm,xfrac,paralist}
\usepackage{url}
\usepackage{xspace}

\usepackage{tikz}
\usepackage{subfig}
\usepackage{array}
\usepackage{verbatim}
\usepackage{times}
\usepackage[ruled,vlined,boxed,commentsnumbered]{algorithm2e}

 \usepackage{tikz}

 \usepackage{fancybox}
\usepackage{fp}

\usepackage{graphicx}

\tikzstyle{fancytitle} = [fill=blue!40, text=black, rounded corners,inner sep=4pt] 
\tikzstyle{mybox} = [draw=blue!40, fill=blue!20, very thick, rectangle, rounded corners, inner ysep=10pt, drop shadow]
\tikzstyle{fancytitleR} = [fill=bluegreen, text=black, rounded corners,inner sep=4pt] 
\tikzstyle{myboxTitle} = [draw=bluetheme, fill=bluetheme, very thick, rectangle, rounded corners, inner ysep=10pt, drop shadow]
\tikzstyle{myboxR} = [draw=bluegreen, fill=bluegreen!20, very thick, rectangle, rounded corners, inner ysep=10pt]
 \usepackage{calc}

\usetikzlibrary{positioning}
\usetikzlibrary{calc}
\usetikzlibrary{backgrounds}

\newcommand{\pr}[1]{\ensuremath{p_{#1}}}

\newcommand{\inSched}{
\setcounter{nbProcs}{0}
\FPadd{\cury}{\procspace}{1}
}

\newcommand{\addProc}{
\addtocounter{nbProcs}{1}
\FPsub{\cury}{\cury}{\procspace}
\FPsub{\cury}{\cury}{1}
\draw[draw=none] (-1.5,\cury) rectangle (-0.5,\cury-1) node[midway] {\pr{\thenbProcs}};
\FPeval{\curx}{0}
}

\newcommand{\addProcChain}{
\addtocounter{nbProcs}{1}
\FPsub{\cury}{\cury}{\procspace}
\FPsub{\cury}{\cury}{1}
\FPsub{\cury}{\cury}{\procspace}

\draw[draw=none] (-1.5,\cury) rectangle (-0.5,\cury-1) node[midway] {$\pr{1}^{eq}$};
\FPeval{\curx}{0}
}

\newcommand{\addJobSpeed}[4][anchor=center]{ 

\draw[fill=white] (\curx,\cury-1) rectangle ($(\curx+#2/#3,\cury+#3-1)$) node[#1,midway,very thick] {$w_#4$}; 
\FPdiv{\curz}{#2}{#3}
\FPadd{\curx}{\curx}{\curz}
}

\newcommand{\addLisJob}[4][midway]{ 
\draw (\curx,\cury-1) rectangle ($(\curx+#2/#3,\cury+#3-1)$) node[#1] {$w_#4$}; 
\FPsub{\cury}{\cury}{\procspace}
\FPsub{\cury}{\cury}{1}
}

\theoremstyle{plain}

\newtheorem{lemma}{Lemma}
\newtheorem{theorem}{Theorem}
\newtheorem{proposition}{Proposition}
\newtheorem{corollary}{Corollary}

\theoremstyle{definition}

\theoremstyle{remark}

\newtheorem*{remark}{Remark}

\newcommand{\fmax}{\ensuremath{f_{\max}}\xspace}
\newcommand{\fmin}{\ensuremath{f_{\min}}\xspace}
\newcommand{\finf}{\ensuremath{f_{\inf,i}}\xspace}
\newcommand{\fr}{\ensuremath{f_{\texttt{rel}}}\xspace}
\newcommand{\freex}{\ensuremath{f_{\texttt{re-ex}}}\xspace}

\newcommand{\energy}{\textsc{Energy}} 
\newcommand{\exe}{\ensuremath{\mathcal{E}\!xe}\xspace}
\newcommand{\dff}{\textsc{Dec\-reasing-First-Fit}\xspace}
\newcommand{\partition}{2-\textsc{Partition}\xspace}
\newcommand\II{\ensuremath{\mathcal{I}}\xspace}
\newcommand\EE{\ensuremath{\mathcal{E}}\xspace}

\newcommand{\approxchain}{\textsc{Ap\-prox-Chain}}
\newcommand{\xopt}{\textsc{X-Opt}}
\newcommand{\merge}{\textsc{Merge-Lists}}
\newcommand{\add}{\textsc{Add-List}}
\newcommand{\trim}{\textsc{Trim}}
\newcommand{\computeVl}{\textsc{Compute\_}$V_l$}

\newcommand{\tricrit}{\textsc{Tri-Crit}\xspace}
\newcommand{\chain}{\textsc{Tri-Crit-Chain}\xspace}
\newcommand{\indep}{\textsc{Tri-Crit-In\-dep}\xspace}

\let\ab\allowbreak


\title{Approximation algorithms for energy, reliability and makespan
  optimization problems}

\author{Guillaume Aupy\\
   LIP, ENS Lyon, France\\
   Guillaume.Aupy@ens-lyon.fr \\
   \and
   Anne Benoit \\
   LIP, ENS Lyon, France \& Institut Universitaire de France, Paris, France \\
   Anne.Benoit@ens-lyon.fr \\
   }

\date{\today}

\begin{document}
\maketitle

\begin{abstract}
  In this paper, we consider the problem of scheduling an application
  on a parallel computational platform. The application is a
  particular task graph, either a linear chain of tasks, or a set of
  independent tasks. The platform is made of identical processors,
  whose speed can be dynamically modified. It is also subject to
  failures: if a processor is slowed down to decrease the energy
  consumption, it has a higher chance to fail. Therefore, the
  scheduling problem requires to re-execute or replicate tasks (i.e.,
  execute twice a same task, either on the same processor, or on two
  distinct processors), in order to increase the reliability. It is a
  tri-criteria problem: the goal is to minimize the energy
  consumption, while enforcing a bound on the total execution time
  (the makespan), and a constraint on the reliability of each task. 

  Our main contribution is to propose approximation algorithms for
  these particular classes of task graphs.  For linear chains, we
  design a fully polynomial time approximation scheme. However, we
  show that there exists no constant factor approximation algorithm
  for independent tasks, unless P=NP, and we are able in this case to
  propose an approximation algorithm with a relaxation on the makespan
  constraint.
\end{abstract}

\section{Introduction}
\label{sec.intro}

Energy-awareness is now recognized as a first-class constraint in the
design of new scheduling algorithms. To help reduce energy
dissipation, current processors from AMD, Intel and Transmetta allow
the speed to be set dynamically, using a dynamic voltage and frequency
scaling technique (DVFS). Indeed, a processor running at speed $s$
dissipates $s^3$ watts per unit of time \cite{pow3IPDPS}.  However,
it has been recognized that reducing the speed of a processor has a
negative effect on the reliability of a schedule: if a processor is
slowed down, it has a higher chance to be subject to transient
failures, caused for instance by software errors 
\cite{Zhu04EEM,Degal05SEI}. 

Motivated by the application of speed scaling on large scale machines
\cite{Oliner04}, we consider a tri-criteria problem
energy/reliability/makespan: the goal is to minimize the energy
consumption, while enforcing a bound on the makespan, i.e., the total
execution time, and a constraint on the reliability of each task.  The
application is a particular task graph, either a linear chain of
tasks, or a set of independent tasks. The platform is made of
identical processors, whose speed can be dynamically modified. 

In order to make up for the loss in reliability due to the energy
efficiency, we consider two standard techniques: \emph{re-execution}
consists in re-executing a task twice on a same processor
\cite{Zhu04EEM,Zhu06}, while \emph{replication} consists in executing
a same task on two distinct processors simultaneously
\cite{Assayad11}. We do not consider \emph{checkpointing}, which
consists in ``saving'' the work done at some points, hence reducing
the amount of work lost when a failure occurs
\cite{Melhem03CP,Zhang03CP}.
 
The schedule therefore requires to (i) decide which tasks are
re-executed or replicated; (ii) decide on which processor(s) each task
is executed; (iii) decide at which speed each processor is processing
each task. For a given schedule, we can compute the total execution
time, also called {\em makespan}, and it should not exceed a
prescribed deadline. Each task has a reliability that can be computed
given its execution speed and its eventual replication or
re-execution, and we must enforce that the execution of each task is
reliable enough.  Finally, we aim at minimizing the energy
consumption.  Note that we consider a set of homogeneous processors,
but each processor may run at a different speed; this corresponds to
typical current platforms with DVFS.

\paragraph{Related work.}
The problem of minimizing the energy consumption without exceeding a
given deadline, using DVFS, has been widely studied, without
accounting for reliability issues.  The problem for a linear chain of
tasks is known to be solvable in polynomial time in this case, see
\cite{aupy12ccpe}. \cite{Alon97} showed that the problem of
scheduling independent tasks can be approximated by a factor
$(1+\varepsilon)$: they exhibit a polynomial time approximation scheme
(PTAS). \cite{RenaudGoudGreedy} studied the performance of greedy
algorithms for the problem of scheduling independent tasks, with the
objective of minimizing the energy consumption, and proposed some
approximation algorithms.

All these work do not account for reliability issues. However,
\cite{Zhu04EEM} showed that reducing the speed of a processor
increases the number of transient failure rates of the system; the
probability of failures increases exponentially, and this probability
cannot be neglected in large-scale computing \cite{Oliner04}.  Few
authors have tackled the tri-criteria problem including reliability,
and to the best of our knowledge, there are no approximation
algorithms for this problem. \cite{Zhu06} initiated the study of this
problem, using re-execution. However, they restrict their study to the
scheduling problem on a single processor, and do not try to find any
approximation ratio on their algorithm. \cite{Assayad11} have
recently proposed an off-line tri-criteria scheduling heuristic (TSH),
which uses replication to minimize the makespan, with a
threshold on the global failure rate and the maximum power
consumption.  TSH is an improved critical-path list sche\-duling
heuristic that takes into account power and reliability before
deciding which task to assign and to replicate onto the next free
processors. However, the complexity of this heuristic is unfortunately
exponential in the number of processors, and the authors did not try
to give an approximation ratio on their heuristic.  Finally,
\cite{rr7757} also study the tri-criteria problem, but from an
heuristic point of view, without trying to ensure any approximation
ratio on their heuristics. Moreover, they do not consider replication
of tasks, but only re-execution as in \cite{Zhu06}.  However, they
present a formal model of the tri-criteria problem, re-used in this
paper.

Finally, there is some related work specific to the problem of
independent tasks, since several approximation algorithms have been
proposed for variants of the problem. One may try to minimize the
$\ell_k$ norm, i.e., the quantity $(\sum_{q=1}^p (\sum_{i\in load(q)}
a_i)^k)^{1/k}$, with $p$ processors, where $i\in load(q)$ means that
task~$T_i$ is assigned to processor~$q$, and $a_i$ is the weight of
task~$T_i$ \cite{Alon97}.  Minimizing the power consumption
then amounts to minimize the $\ell_3$ norm
\cite{RenaudGoudGreedy}, and the problem of makespan minimization is
equivalent to minimizing  
the $\ell_{\infty}$ norm, i.e., minimize $\max_{1\leq q \leq p}
\sum_{i\in load(q)} a_i$ \cite{Graham69,Ausiello99}.  These problems
are typical {\em load balancing} problems, in which the load
(computation requirement of the tasks) must be balanced between
processors, according to various criteria.


\paragraph{Main contributions.}  
In this paper, we investigate the tri-criteria problem of minimizing
the energy with a bound on the makespan and a constraint on the
reliability.  First in Section~\ref{sec.fw}, we formally introduce
this tri-criteria scheduling problem, based on the previous models
proposed by \cite{Zhu06} and~\cite{rr7757}.  To the best of our
knowledge, this is the first model including both re-execution and
replication in order to deal with failures. The main contribution of
this paper is then to provide approximation algorithms for some
particular instances of this tri-criteria problem.

For linear chains of tasks, we propose a fully polynomial time
approximation scheme (Section~\ref{sec.lin}). Then in
Section~\ref{sec.indep}, we show that there exists no constant factor
approximation algorithm for the tri-criteria problem with independent
tasks, unless P=NP. We prove that by relaxing the constraint on
the makespan, we are able to give a polynomial time constant factor
approximation algorithm.  To the best of our knowledge, these are the
first approximation algorithms for the tri-criteria problem.

\section{Framework}
    \label{sec.fw}
Consider an application task graph $\mathcal{G}=(V,\mathcal{E})$,
where $V= \{T_1, T_2, \dots, T_n\}$ is the set of tasks, $n = |V|$,
and where $\EE$ is the set of precedence edges between tasks. For $1
\leq i \leq n$, task~$T_i$ has a weight~$w_i$, that corresponds to the
computation requirement of the task.  $S=\sum_{i=1}^n w_i$ is the
sum of the computation requirements of all tasks.

The goal is to map the task graph onto $p$ identical processors, with
the objective of minimizing the total energy consumption, while
enforcing a bound on the total execution time (makespan), and matching
a reliability constraint. Processors can have arbitrary speeds,
determined by their frequency, that can take any value in the interval
$[\fmin,\fmax]$ (dynamic voltage and frequency scaling with continuous
speeds). Higher frequencies, and hence faster speeds, allow for a
faster execution, but they also lead to a much higher (supra-linear)
power consumption.  Moreover, reducing the frequency of a processor
increases the number of transient failures of the system. Therefore,
some tasks are executed once at a speed high enough to satisfy the
reliability constraint, while some other tasks are executed several
times (either on the same processor, or on different processors), at a
lower speed. We detail below the conditions that are enforced on the
corresponding execution speeds. The problem is therefore to decide
which tasks should be executed several times, on which processor, and
at which speed to run each execution of a task, as well as the
schedule, i.e., in which order the tasks are executed on each
processor.  Note that \cite{rr7757} showed that it is always better
to execute a task at a single speed, and therefore we assume in the
following that each execution of a task is done at a single speed.

We now detail the three objective criteria (makespan, reliability,
energy), and then define formally the problem.

\subsection{Makespan}

The makespan of a schedule is its total execution time. The first task
is scheduled at time $0$, so that the make\-span of a schedule is simply
the maximum time at which one of the processors finishes its
computations.  Given a schedule, the makespan should not exceed the
prescribed deadline~$D$. 

Let $\exe(w_i,f)$ be the execution time of a task $T_i$ of
weight~$w_i$ at speed~$f$. We assume that the cache size is adapted to
the application, therefore ensuring that the execution time is
linearly related to the frequency \cite{Melhem03CP}: $\exe(w_i,f) =
\frac{w_i}{f}$. Note that we consider a worst-case scenario, and the
deadline~$D$ must be matched even in the case where all tasks that are
scheduled to be executed several times fail during their first
executions, hence all execution times for a same task should be
accounted for.

\subsection{Reliability}
    \label{sec.rel}
To define the reliability, we use the failure model of
\cite{Zhu04EEM} and \cite{Zhu06}. 
\emph{Transient} failures are failures caused by software errors for example. 
They invalidate only the execution of
the current task and the processor subject to that failure will be
able to recover and execute the subsequent tasks assigned to it (if
any). In addition, we use the reliability model introduced by \cite{Shatz89}, 
which states that the radiation-induced
transient failures follow a Poisson distribution.  The parameter
$\lambda$ of the Poisson distribution is then 
$\lambda(f)=\tilde{\lambda_0} \;  e^{\tilde{d}\frac{\fmax-f}{\fmax-\fmin}}$,
where $\fmin\leq f \leq \fmax$ is the processing speed, the exponent
$\tilde{d}\geq0$ is a constant, indicating the sensitivity of failure
rates to dynamic voltage and frequency scaling, and
$\tilde{\lambda_0}$ is the average failure rate at speed~\fmax.  We
see that reducing the speed for energy saving increases the failure
rate exponentially.  The reliability of a task~$T_i$ executed once at
speed $f$ is $$R_i(f)=e^{-\lambda(f)\times\exe(w_i,f)}.$$  Because the
failure rate~$\tilde{\lambda_0}$ is usually very small, of the order of
$10^{-5}$ per time unit \cite{Assayad11}, or even $10^{-6}$
\cite{Baleani03,Izo07}, we can use the first order approximation of
$R_i(f)$ as 
\begin{align*}
R_i(f) &= 1-\lambda(f)\times\exe(w_i,f) \\
       &= 1-\tilde{\lambda_0}\;
       e^{\tilde{d}\frac{\fmax-f}{\fmax-\fmin}} \times \frac{w_i}{f}
       \\
       &= 1-\lambda_0\;e^{-df} \times \frac{w_i}{f},
       \end{align*}
where $d=\frac{\tilde{d}}{\fmax-\fmin}$ and $\lambda_0 =
\tilde{\lambda_0} e^{d\fmax}$. 

\medskip
Note that this equation holds if \mbox{$\varepsilon_{i} = \lambda(f) \times
  \frac{w_i}{f} \ll 1$}.  With, say, $\lambda(f) = 10^{-5}$, we need
$\frac{w_i}{f} \leq 10^{3}$ to get an accurate approximation with
$\varepsilon_{i} \leq 0.01$: the task should execute within $16$
minutes. In other words, large (computationally demanding) tasks
require reasonably high processing speeds with this model (which makes
full sense in practice).

We want the reliability~$R_i$ of each task $T_{i}$ to be greater than
a given threshold, namely $R_{i}(\fr)$, hence enforcing a local
constraint dependent on the task: $R_i \geq R_{i}(\fr)$.   
If task~$T_{i}$ is executed only once at speed~$f$, then the
reliability of~$T_i$ is $R_i=R_i(f)$. Since the reliability increases
with speed, we must have $f\geq \fr$ to match the reliability
constraint.
If task~$T_{i}$ is executed twice (speeds~$f^{(1)}$ and~$f^{(2)}$),
then the execution of~$T_{i}$ is successful if and only if one of the
attempts do not fail, so that the reliability of~$T_{i}$ is $R_{i}= 1
- (1 - R_i(f^{(1)}))( 1 - R_i(f^{(2)}))$, and this quantity should be
at least equal to $R_{i}(\fr)$. 

We restrict in this work to a maximum of two executions of a same task,
either on the same processor (what we call {\em re-execution}), or on
two distinct processors (what we call {\em replication}).  This is
based on the following observation on the two cases in which a third
execution of a task may be useful. 
\begin{compactenum}
\item The deadline is such that even if all tasks are executed
  twice at the slowest possible speed, the execution time is still
  lower than the deadline. Then, the problem is to decide which task
  should be executed three times, and it is quite similar to the
  problem that we discuss in this paper. 
\item Some tasks are too big to be re-executed while there remains
  some time such that some small tasks can be executed at least three
  times at a speed even slower. In this case, the gain in energy
  consumption is negligible compared to the energy consumption of the
  big tasks at speed~\fr.
\end{compactenum}

Note that if both execution speeds are equal, i.e.,
$f^{(1)}=f^{(2)}=f$, then the reliability constraint writes
$1-(\lambda_0 w_i\frac{ e^{-df}}{f})^2 \geq R_{i}(\fr)$, and
therefore $$\lambda_0 w_i \frac{ e^{-2df}}{f^2} \leq \frac{
  e^{-d\fr}}{\fr}\; .$$ In the following, $\finf$ is the solution to
the equation $\lambda_0 w_i \frac{e^{-2d\finf}}{(\finf)^{2}} =
\frac{e^{-d\fr}}{\fr}$, and hence task~$T_i$ can be executed twice at
a speed greater than or equal to $\finf$ while meeting the reliability
constraint. In practice, $\finf$ is small enough so that tasks are
usually executed faster than this speed, hence reinforcing the
argument that it is meaningful to restrict to two executions of a same
task.

\subsection{Energy}

The total energy consumption corresponds to the sum of the energy
consumption of each task. Let $E_i$ be the energy consumed by
task $T_i$. For one execution of $T_i$ at speed~$f$, the
corresponding energy consumption is $E_i(f) =\exe(w_i,f) \times f^3 =
w_i \times f^2$, which corresponds to the dynamic part of the
classical energy models of the literature \cite{pow3IPDPS,BKP07}.  
Note that we do not take static energy into account, because all 
processors are up and alive during the whole execution.

If task $T_{i}$ is executed only once at speed~$f$, then $E_{i} =
E_i(f)$.  Otherwise, if task $T_{i}$ is executed twice at speeds
$f^{(1)}$ and $f^{(2)}$, it is natural to add up the energy consumed
during both executions, just as we consider both execution times when
enforcing the deadline on the makespan. Again, this corresponds to the
worst-case execution scenario.  We obtain $E_i = E_i(f^{(1)}_i) +
E_i(f^{(2)}_i)$.  Note that some authors \cite{Zhu06} consider only
the energy spent for the first execution in the case of re-execution,
which seems unfair: re-execution comes at a price both in the makespan
and in the energy consumption.  Finally, the total energy consumed by
the schedule, which we aim at minimizing, is $E = \sum_{i=1}^{n} E_i$.




    \subsection{Optimization problem}
\label{opt_problem}
Given an application graph $\mathcal{G}=(V,\mathcal{E})$ and $p$
identical processors, \tricrit is the problem of finding a schedule
that specifies which tasks should be executed twice, on which
processor and at which speed each execution of a task should be
processed, such that the total energy consumption~$E$ is minimized,
subject to the deadline~$D$ on the makespan and to the local reliability
constraints $R_i \geq R_i(\fr)$ for each $T_i \in V$.

We focus in this paper on the two following sub-problems that are 
restrictions of \tricrit to special application graphs:
\begin{itemize}
\item \chain: the graph is such that 
\\$\mathcal{E} = \cup_{i=1}^{n-1} \{ T_i \rightarrow T_{i+1}\}$; 

\item \indep: 
  the graph is 
such that $\mathcal{E}=\emptyset$.
\end{itemize}


\section{Linear chains}
\label{sec.lin}

In this section, we focus on the \chain problem, that was shown to be
NP-hard even on a single processor \cite{rr7757}. We derive an FPTAS
(Fully Polynomial Time Approximation Scheme) to sol\-ve the general
problem with replication and re-execution on $p$~processors. We start
with some preliminaries in Section~\ref{lin.char} that allow us to
characterize the shape of an optimal solution, and then we detail the
FPTAS algorithm and its proof in Section~\ref{lin.fptas}.

\subsection{Characterization}
\label{lin.char}

First, we note that while \chain is NP-hard even on a single
processor, the problem has polynomial complexity if no replication nor
re-execution can be used. Indeed, each task is executed only once, and
the energy is minimized when all tasks are running at the same speed. 
Note that this result can be found in \cite{aupy12ccpe}. 
\begin{lemma}
  \label{lemma_norel}
  Without replication or re-execution, solving \chain can be done 
  in polynomial time, and each task is
  executed at speed  $\max\left(\fr,\frac{S}{D}\right)$. 
\end{lemma}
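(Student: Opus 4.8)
The plan is to prove this in two parts: first, that if no replication or re-execution is allowed, then in any optimal solution all tasks run at a common speed, and second, to determine that common speed explicitly as $\max(\fr,\frac{S}{D})$. Since each task is executed exactly once and the tasks form a linear chain, they must all be processed sequentially (on a single conceptual processor, or at least with no possible overlap given the precedence constraints $T_i \rightarrow T_{i+1}$). Hence the makespan is simply the sum of the individual execution times $\sum_{i=1}^n \frac{w_i}{f_i}$, where $f_i$ is the speed chosen for task $T_i$, and the total energy is $\sum_{i=1}^n w_i f_i^2$.

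First I would set up the optimization problem formally: minimize $\sum_{i=1}^n w_i f_i^2$ subject to the makespan constraint $\sum_{i=1}^n \frac{w_i}{f_i} \leq D$ and the reliability constraints $f_i \geq \fr$ for every $i$. The key structural claim is that at the optimum all speeds are equal. I would establish this by a convexity/exchange argument: suppose two tasks run at different speeds $f_j \neq f_k$ while both constraints have slack to move them; then by the strict convexity of $f \mapsto f^2$ against the convex time cost $f \mapsto 1/f$, one can equalize the two speeds (keeping the total time contribution of those two tasks fixed, or reducing it) while strictly lowering the energy, contradicting optimality. Equivalently, I would invoke the standard Lagrangian/KKT stationarity condition, which forces $w_i f_i^2$ and $\frac{w_i}{f_i}$ to be balanced so that all unconstrained speeds coincide at a single value.

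Once equal speeds are established, write $f_i = f$ for all $i$, so the makespan is $\frac{S}{f}$ with $S = \sum_{i=1}^n w_i$, and the energy is $S f^2$. Energy is strictly increasing in $f$, so we want $f$ as small as possible. Two lower bounds on $f$ apply: the reliability constraint gives $f \geq \fr$, and the deadline $\frac{S}{f} \leq D$ gives $f \geq \frac{S}{D}$. The smallest feasible speed is therefore $f = \max(\fr, \frac{S}{D})$, which yields the claimed result. Feasibility is immediate since this $f$ satisfies both constraints with equality on the binding one.

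The main obstacle I anticipate is making the equal-speed argument fully rigorous against the inequality constraints rather than assuming an interior optimum. In particular, if some tasks are pinned at the reliability floor $\fr$ while others would "want" to run slower, the clean single-speed conclusion could fail; I would need to argue that at the global optimum either the deadline is loose (so reliability binds and every task sits at $\fr = \max(\fr,\frac{S}{D})$, consistent with the formula) or the deadline binds and the convexity exchange drives all speeds to the common value $\frac{S}{D} \geq \fr$. Handling the interplay of the two constraint families — showing no mixed configuration with distinct speeds can beat the uniform one — is the delicate step; everything else reduces to monotonicity of $Sf^2$ in $f$.
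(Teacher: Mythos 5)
Your proposal is correct and follows essentially the same route as the paper's proof: serialize the chain (precedence forces sequential execution, so the makespan is $\sum_i w_i/f_i$ regardless of processor assignment), argue all tasks run at a common speed, and pick the smallest feasible one, $\max\left(\fr,\frac{S}{D}\right)$. The paper's version is simply terser, asserting the equal-speed property by importing it from prior work rather than re-proving it.

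The ``delicate step'' you flag at the end in fact closes in one line, and it is worth recording since it is the only place your argument is not yet airtight. When you equalize two distinct speeds $f_j \neq f_k$ at fixed combined execution time, the unique minimizer of $w_j f_1^2 + w_k f_2^2$ subject to $\frac{w_j}{f_1}+\frac{w_k}{f_2} = \frac{w_j}{f_j}+\frac{w_k}{f_k}$ is the common speed $\tilde{f} = \frac{w_j+w_k}{w_j/f_j + w_k/f_k}$, a weighted harmonic mean that lies strictly between $\min(f_j,f_k)$ and $\max(f_j,f_k)$. Since both original speeds satisfy $f_j, f_k \geq \fr$, so does $\tilde{f}$; the exchange therefore preserves both the deadline and every reliability constraint while strictly lowering the energy, so no mixed configuration (some tasks pinned at $\fr$, others above) can be optimal when the deadline binds. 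With that observation your case split is complete: if $S/\fr \leq D$, every term satisfies $w_i f_i^2 \geq w_i \fr^2$ and the all-$\fr$ solution is feasible hence optimal; otherwise the exchange forces a single speed $f$, and monotonicity of $Sf^2$ gives $f = \frac{S}{D} \geq \fr$.
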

\begin{proof}
  For a linear chain of tasks, all tasks can be mapped on the same
  processor, and scheduled following the dependencies. No task may
  start earlier by using another processor, and all tasks run at the
  same speed. Since there is no replication nor re-execution, each
  task must be executed at least at speed \fr for the reliability
  constraint. If $S/\fr>D$, then the tasks should be executed at speed
  $S/D$ so that the deadline constraint is matched (recall that
  $S=\sum_{i=1}^n w_i$), hence the result. 
\end{proof}

Next, accounting for replication and re-execution, we characterize the
shape of an optimal solution. For linear chains, it turns out that
with a single processor, only re-execution will be used, while with
more than two processors, there is an optimal solution that do not use
re-execution, but only replication. 
\begin{lemma}[Replication or re-execution]
  \label{chain_reporreex}
  When there is only one processor, it is optimal to only use
  re-execu\-tion to solve \chain.  When there are
  at least two processors, it is optimal to only use replication to
  solve \chain.
\end{lemma}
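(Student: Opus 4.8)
The plan is to treat the two assertions separately and to dispatch the single-processor case first, since it is essentially definitional. With only one processor, replication is impossible: by definition it requires executing a task on two \emph{distinct} processors simultaneously. Hence whenever an optimal schedule chooses to execute a task twice, the second execution can only be a re-execution on the same processor. So the first statement needs no real argument beyond this observation, and all the work lies in the case $p \geq 2$.

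For $p \geq 2$, the core observation I would isolate is that, for a \emph{fixed} pair of execution speeds $f^{(1)}, f^{(2)}$, re-execution and replication are indistinguishable with respect to both energy and reliability, and differ only in their makespan footprint. Indeed, the reliability of a doubly-executed task, $1-(1-R_i(f^{(1)}))(1-R_i(f^{(2)}))$, and its energy, $E_i(f^{(1)})+E_i(f^{(2)}) = w_i\bigl((f^{(1)})^2+(f^{(2)})^2\bigr)$, depend only on the two speeds and not on whether the two executions are sequential (same processor) or concurrent (two processors). The makespan contribution, however, is $\exe(w_i,f^{(1)})+\exe(w_i,f^{(2)})$ for re-execution but only $\max(\exe(w_i,f^{(1)}),\exe(w_i,f^{(2)}))$ for replication, since the two replicas run in parallel. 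Because $\max \leq$ sum, replication dominates re-execution: at equal energy and equal reliability it never increases, and generally decreases, the time the task occupies.

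First I would take an arbitrary optimal schedule for \chain and, for every re-executed task, rewrite that re-execution as a replication at the \emph{same} two speeds, placing the second copy on a second processor in parallel with the first. By the observation above this leaves the total energy and every local reliability constraint untouched, so the rewritten schedule keeps the optimal energy value and stays reliable; it then remains only to check that it still meets the deadline~$D$. Here the linear-chain structure is decisive: because of the precedence edges $T_i \rightarrow T_{i+1}$, task $T_{i+1}$ cannot start before the (worst-case) completion of $T_i$, so at any instant at most one task of the chain, together with its second copy, is active, and during the time window of any given task all other processors are idle. With $p \geq 2$ there is therefore always a free processor to host the replica, and since the replicated window $\max(\cdots)$ is no larger than the re-executed window, the makespan can only shrink and the deadline remains satisfied.

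The main obstacle is precisely this feasibility-of-the-replica step: I must argue carefully that converting re-executions into replications never requires more than two simultaneously busy processors, so that all conversions can be performed at once without conflict. I expect to handle this by formalizing that a chain executes as a sequence of pairwise disjoint time intervals, one per task, so that the extra processor demanded by each replica is needed only during that task's own interval; the conversions therefore do not interfere, and the overall processor usage stays at most two throughout. Performing all of them yields an optimal schedule that uses only replication, which is exactly the claim for $p \geq 2$.
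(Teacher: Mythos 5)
Your proof is correct and takes essentially the same route as the paper: the single-processor case is dispatched as definitional, and for $p \geq 2$ you convert each re-execution into a replication at the same two speeds, using the chain's precedence structure to argue that at most one task (plus its copy) is ever active, so a second processor is always free, energy and reliability are unchanged, and the makespan can only decrease. Your explicit observation that energy and reliability depend only on the pair of speeds—so the two modes differ only in their makespan footprint—is a slightly more detailed spelling-out of what the paper's exchange argument uses implicitly.
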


\begin{proof}
  With one processor, the result is obvious, since replication cannot
  be used. With more than one processor, if re-execution was used on
  task~$T_i$, for $1\leq i \leq n$, we can derive a solution with the
  same energy consumption and a smaller execution time by using
  replication instead of re-execution. Indeed, all instances of
  tasks~$T_j$, for $j<i$, must finish before $T_i$ starts its
  execution, and similarly, all instances of tasks~$T_j$, for $j>i$,
  cannot start before both copies of~$T_i$ has finished its
  execution. Therefore, there are always at least two processors
  available when executing~$T_i$ for the first time, and the execution
  time is reduced when executing both copies of~$T_i$ in parallel
  (replication) rather than sequentially (re-execution). 
\end{proof}

We further characterize the shape of an optimal solution by showing
that two copies of a same task can always be executed at the same
speed. 
\begin{lemma}[Speed of the replicas]
  \label{lemma.speed.chain}
  For a linear chain, when a task is executed two times, it is optimal
  to have both replicas executed at the same speed.
\end{lemma}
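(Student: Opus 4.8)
The plan is to take any feasible schedule in which some task $T_i$ is executed twice at two different speeds $f^{(1)} \neq f^{(2)}$, and to replace both executions by a single common speed $f$ without increasing the energy or the makespan, and without violating the reliability constraint; applying this transformation to every doubly-executed task then yields an optimal solution of the desired form. Writing $g(f) = e^{-df}/f$, so that $1 - R_i(f) = \lambda_0 w_i\, g(f)$, the reliability constraint for two executions reads $g(f^{(1)})\,g(f^{(2)}) \le g(\fr)/(\lambda_0 w_i)$. The natural choice is the reliability-preserving common speed, namely the $f$ defined by $g(f)^2 = g(f^{(1)})\,g(f^{(2)})$: running $T_i$ twice at speed $f$ then has exactly the same reliability as the original pair, so the constraint is still satisfied. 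Since $g$ is continuous and strictly decreasing on $(0,\infty)$, the geometric mean $\sqrt{g(f^{(1)})g(f^{(2)})}$ lies between $g(\max(f^{(1)},f^{(2)}))$ and $g(\min(f^{(1)},f^{(2)}))$, so such an $f$ exists and automatically lies in $[\fmin,\fmax]$.

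Everything then reduces to two elementary inequalities about $\chi(t) = \log g(t) = -dt - \log t$, which is decreasing and convex on $(0,\infty)$ since $\chi'(t) = -d - 1/t < 0$ and $\chi''(t) = 1/t^2 > 0$. The defining relation for $f$ is $\chi(f) = \tfrac12(\chi(f^{(1)}) + \chi(f^{(2)}))$. For the energy, Jensen's inequality together with the fact that $\chi$ is decreasing gives $f \le \tfrac12(f^{(1)} + f^{(2)})$; combined with $(f^{(1)} + f^{(2)})^2 \le 2((f^{(1)})^2 + (f^{(2)})^2)$ this yields $2f^2 \le (f^{(1)})^2 + (f^{(2)})^2$, so the energy does not increase. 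For the replication makespan (at least two processors), the contribution of $T_i$ is $w_i/\min(f^{(1)},f^{(2)})$; since $\chi(f)$ is the average of $\chi(f^{(1)})$ and $\chi(f^{(2)})$, it is at most $\max(\chi(f^{(1)}),\chi(f^{(2)})) = \chi(\min(f^{(1)},f^{(2)}))$, whence $f \ge \min(f^{(1)},f^{(2)})$ and the makespan contribution only decreases.

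The remaining and slightly more delicate case is the re-execution makespan (single processor), where the contribution of $T_i$ is $w_i(1/f^{(1)} + 1/f^{(2)})$ and I must show $2/f \le 1/f^{(1)} + 1/f^{(2)}$. Here I would switch to the inverse-speed variable $s = 1/t$ and use $\psi(s) = \chi(1/s) = \log s - d s$, which is increasing and \emph{concave}, since $\psi'(s) = 1/s + d/s^2 > 0$ and $\psi''(s) = -1/s^2 - 2d/s^3 < 0$. The relation defining $f$ becomes $\psi(1/f) = \tfrac12(\psi(1/f^{(1)}) + \psi(1/f^{(2)}))$, and concavity plus monotonicity of $\psi$ give $1/f \le \tfrac12(1/f^{(1)} + 1/f^{(2)})$, which is exactly the required inequality. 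I expect this inverse-speed concavity step to be the main obstacle: the energy and the replication makespan both follow from convexity of $\chi$ in the speed variable, but the re-execution makespan does not, and noticing that the correct change of variable is $s = 1/t$ (turning the convex, decreasing $\chi$ into the concave, increasing $\psi$) is the one non-routine observation. Once all three inequalities are in hand, replacing $(f^{(1)}, f^{(2)})$ by $(f,f)$ leaves the schedule feasible with no larger energy and no larger makespan, and iterating over all doubly-executed tasks proves the lemma.
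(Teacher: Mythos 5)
Your proof is correct, and it takes a genuinely different, more self-contained route than the paper's. The paper disposes of the re-execution case by citation (a convexity argument from \cite{rr7757}) and handles replication with a purely structural remark: in a chain, both copies of a task share the same start and end constraints, so they should be executed simultaneously, hence at the same speed. You instead give one unified dominance argument covering both cases: replace $(f^{(1)},f^{(2)})$ by the reliability-preserving common speed $f$ with $g(f)^2=g(f^{(1)})\,g(f^{(2)})$, where $g(t)=e^{-dt}/t$, and verify that energy, re-execution time, and the replication window all weakly improve, via convexity of $\chi=\log g$ in the speed variable and concavity of $s\mapsto\chi(1/s)$ in the inverse-speed variable. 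This buys a complete proof of exactly the fact the paper outsources, makes feasibility transparent (reliability is preserved with equality, and $f\in[\fmin,\fmax]$ because it lies between $f^{(1)}$ and $f^{(2)}$), and correctly isolates the one non-routine step, namely the inverse-speed concavity needed for the single-processor makespan; all three inequalities check out. Two remarks. First, in the replication case your claim that the makespan contribution is $w_i/\min(f^{(1)},f^{(2)})$ silently uses the chain-structure fact from Lemma~\ref{chain_reporreex} (both copies occupy a common time window); this is precisely where the ``linear chain'' hypothesis enters, and it must, since the statement fails for independent tasks (Proposition~\ref{prop.indep}), so you should state this dependence explicitly. Second, a harmless slip: $\psi(s)=\chi(1/s)=\log s - d/s$, not $\log s - ds$; the derivatives you then compute, $\psi'(s)=1/s+d/s^2$ and $\psi''(s)=-1/s^2-2d/s^3$, are those of the correct function, so nothing downstream is affected.
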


\begin{proof}
  The proof for re-execution has been done by \cite{rr7757}: by
  convexity of the energy and reliability functions, it is always
  advantageous to execute two times the task at the same speed, even
  if the application is not a linear chain. 

  For replication, this lemma is only true in the case of linear
  chains. Indeed, because of the structure of the chain, as explained
  in the proof of Lemma~\ref{chain_reporreex}, both copies of a task
  have the same constraints on starting and ending time, and hence it
  is better to execute them exactly at the same time. 
\end{proof}

We can further characterize an optimal solution by providing detailed
information about the execution speed of the tasks, depending whether
they are executed only once, re-executed, or replicated. 
\begin{proposition}
  \label{prop_WC_fr}
  If $D > \frac{S}{\fr}$, then in any optimal solution of
  \chain, all tasks that are neither
  re-executed nor replicated are executed at speed~\fr.  Furthermore, 
  let $V_r \subseteq V$ be the subset of tasks that are either
  re-executed or replicated. Then, these tasks are all executed at the
  same speed \freex, if $\freex \geq \max(\fmin, \max_{T_i \in V_r}
  \finf)$. 
\end{proposition}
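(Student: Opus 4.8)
The plan is to exploit the convexity of the energy and reliability functions together with the observation that, once the deadline is loose enough, we have slack time to redistribute speeds across tasks to lower the energy while keeping all constraints satisfied. The optimization reduces to a constrained convex program, and the structure of optimality conditions forces all tasks in the same category (single-execution vs. executed-twice) to share a common speed.

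First I would treat the tasks executed only once. Each such task $T_i$ contributes energy $w_i f_i^2$ and execution time $w_i/f_i$ to the single chain on which the tasks run sequentially (by Lemma~\ref{chain_reporreex} and Lemma~\ref{lemma.speed.chain}, two copies run in parallel at the same speed, so only one execution time counts per task). The reliability constraint for a singly-executed task requires $f_i \geq \fr$. I would argue that in an optimal solution no singly-executed task runs strictly faster than $\fr$: if it did, we could slow it down to $\fr$, which strictly decreases its energy $w_i f_i^2$ and also \emph{increases} the available slack on the makespan (since $w_i/f_i$ grows); this cannot violate the deadline because $D > S/\fr$ guarantees that running everything at $\fr$ already fits. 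The key point is that slowing a singly-executed task never forces any other task to speed up, so energy strictly drops, contradicting optimality. Hence all such tasks run exactly at $\fr$.

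Next I would handle the tasks in $V_r$. Here the reliability constraint is the coupled condition $\lambda_0 w_i e^{-2df^{(1)}}/ (f^{(1)})^{\,?}\dots$, which by Lemma~\ref{lemma.speed.chain} collapses to both replicas at one speed $f_i$ satisfying $\lambda_0 w_i e^{-2df_i}/f_i^2 \leq e^{-d\fr}/\fr$, i.e. $f_i \geq \finf$. The energy contribution is $2 w_i f_i^2$ and the makespan contribution is $w_i/f_i$. Now the argument becomes genuinely optimization-theoretic: among the tasks in $V_r$, the speeds are coupled only through the single shared deadline constraint $\sum_{i} w_i/f_i \leq D - (\text{time spent by singly-executed tasks})$. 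Minimizing $\sum_{i \in V_r} 2 w_i f_i^2$ subject to a bound on $\sum_{i \in V_r} w_i/f_i$ is a convex problem whose Lagrangian stationarity condition, $\partial_{f_i}(2w_i f_i^2) = \mu\, \partial_{f_i}(w_i/f_i)$, yields $4 f_i^2 = \mu / f_i^2 \cdot (1/w_i)\cdot w_i$, which after simplification is \emph{independent of $w_i$} and forces a common speed $f_i = \freex$ for all $i \in V_r$. The condition $\freex \geq \max(\fmin, \max_{T_i\in V_r}\finf)$ is exactly what is needed for this interior (unconstrained-by-reliability-or-$\fmin$) solution to be feasible; if it holds, none of the per-task reliability or hardware lower bounds is active, so the shared-speed solution is the true optimum.

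The main obstacle I anticipate is \textbf{handling the interaction between the two task classes and the boundary conditions cleanly}. Specifically, one must verify that fixing the singly-executed tasks at $\fr$ and the $V_r$ tasks at a common $\freex$ are \emph{simultaneously} optimal rather than only optimal in isolation — i.e. that the deadline budget is allocated correctly and that no exchange of speed between the two groups can help. I would address this by noting the two groups' speeds interact only through the shared makespan constraint, and then checking the KKT/exchange argument group-by-group: since singly-executed tasks are already at their energy-minimizing reliability floor $\fr$ and slowing them only frees slack, all remaining slack is optimally absorbed by the $V_r$ tasks, whose common-speed solution follows from convexity. The hypothesis $\freex \geq \max(\fmin,\max_{T_i\in V_r}\finf)$ is precisely the regime where the lower-bound constraints are inactive, so no task is pinned at a boundary and the clean common-speed characterization holds.
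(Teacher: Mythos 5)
There is a genuine gap in your treatment of the singly-executed tasks. You claim that a task $T_j$ with $f_j > \fr$ can simply be slowed down to $\fr$ because ``slowing a singly-executed task never forces any other task to speed up,'' justified by $D > S/\fr$. But $D > S/\fr$ only guarantees feasibility of the schedule in which \emph{every} task is executed once at $\fr$ --- not of the schedule at hand, which contains replicated tasks running at speeds at most $\frac{1}{\sqrt{2}}\fr$ and therefore consuming strictly more than their $w_i/\fr$ share of the deadline. In fact, at any optimum in which some singly-executed task runs above $\fr$, the deadline constraint must be binding (otherwise slowing that task alone would already save energy), so slowing $T_j$ to $\fr$ forces some replicated $T_i$ to speed up, and the crux is the energy comparison for this \emph{coupled} exchange. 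This is precisely where the paper's proof does real work: it first records that any replicated task satisfies $f_i \leq \frac{1}{\sqrt{2}}\fr$ (else a single execution at $\fr$ dominates in both energy and time), then invokes the optimal ratio $f_1 = 2^{-1/3}f_2$ for minimizing $2w_if_1^2 + w_jf_2^2$ at fixed total time (Theorem~1 of \cite{aupy12ccpe}): if the optimal $f_j$ were strictly above $\fr$, the coupled optimum would force $f_i = 2^{-1/3}f_j > 2^{-1/2}\fr$, contradicting the bound above, so $f_j$ is pinned at $\fr$. Your closing ``KKT/exchange group-by-group'' paragraph does not fill this hole: it assumes the singly-executed tasks are ``already at their energy-minimizing reliability floor $\fr$,'' which is the very claim to be proved. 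Your stationarity framework could be repaired along the paper's lines --- interior stationarity gives $f_j^3 = \mu/2$ and $f_i^3 = \mu/4$, hence $f_j = 2^{1/3}f_i$, after which $f_j > \fr$ is ruled out via the $\frac{1}{\sqrt{2}}\fr$ bound --- but that bound and the ruling-out step are absent from your proposal.

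Your second part, the common speed on $V_r$, is essentially sound and equivalent to the paper's argument: the stationarity condition $f_i^3 = \mu/4$, independent of $w_i$, is the continuous analogue of the paper's pairwise exchange at the time-preserving mean $\tilde{f} = f_if_j\frac{w_i+w_j}{w_if_j+w_jf_i}$, and you correctly identify the hypothesis $\freex \geq \max(\fmin, \max_{T_i\in V_r}\finf)$ as serving only to keep the per-task lower bounds inactive. One minor slip: for $p=1$ (re-execution) the makespan contribution of a task in $V_r$ is $2w_i/f_i$, not $w_i/f_i$; this does not affect the common-speed conclusion, and the paper in any case cites prior work for the $p=1$ case.
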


\begin{proof}
  The proof for $p=1$ (re-execution) can be found in \cite{rr7757}.
  We prove the result for $p \geq 2$, which corresponds to the case
  with replication and no re-execution (see
  Lemma~\ref{chain_reporreex}).
  Note first that since $D > \frac{S}{\fr}$, if no task is
  replicated, we have enough time to execute all tasks at speed~\fr.

  Now, let us consider that task~$T_i$ is replicated at
  speed~$f_i$ (recall that both replicas are executed at the same
  speed, see Lemma~\ref{lemma.speed.chain}), and task~$T_j$ is executed
  only once at speed~$f_j$. Then, we have
  $f_j\geq\fr$ (reliability constraint on~$T_j$), and 
  $\frac{1}{\sqrt{2}}\fr \geq f_i$ 
  (otherwise, executing $T_i$ only once at speed~\fr would improve
  both the energy and the execution time while matching the
  reliability constraint).

  If $f_j > \fr$, let us show that we can rather execute~$T_j$ at
  speed~$\fr$ and $T_i$ at a new speed $f'_i > f_i$, while keeping the
  same deadline: $\frac{w_i}{f'_i} + \frac{w_j}{\fr} = \frac{w_i}{f_i}
  + \frac{w_j}{f_j}$.  The energy consumption is then $2w_i f_i^{'2} +
  w_j \fr^2$.  Moreover, we know that the minimum of the function
  $2w_i f_1^2 + w_j f_2^2$, given that $\frac{w_i}{f_1} +
  \frac{w_j}{f_2}$ is a constant (where $f_1$ and $f_2$ are the
  unknowns), is obtained for $f_1 = \frac{1}{2^{1/3}}f_2$ (see
  Theorem~1 by \cite{aupy12ccpe}).  Therefore, if the optimal speed of
  $T_j$ (i.e., $f_2$) is strictly greater than~$\fr$, then the optimal
  speed for $T_i$ is $f'_i=f_1 = \frac{1}{2^{1/3}}f_2 >
  \frac{1}{2^{1/2}}f_2 > \frac{1}{2^{1/2}} \fr$, that means that we
  can improve both energy and execution time by executing $T_i$ only
  once at speed~\fr.  Otherwise, the speed of $T_j$ is further
  constrained by~\fr, hence the previous inequality ($f_1 =
  \frac{1}{2^{1/3}}f_2$) does not hold anymore, and the function is
  minimized for $f_2=\fr$. The value of $f'_i$ can be easily deduced
  from the constraint on the deadline.  This proves that all tasks
  that are not replicated are executed at speed~\fr.

  Let $M=\max(\fmin, \max_{T_i \in V_r}\finf)$. We now prove that if two
  tasks are replicated at a speed greater than~$M$, then both tasks
  are executed at the same speed. 
  Suppose that $T_i$ and $T_j$ are executed twice at speeds $f_i > f_j
  \geq M$. Let $\tilde{f} = f_i f_j \frac{w_i+w_j}{w_if_j + w_jf_i}$.
  Then $f_i > \tilde{f} > f_j \geq M$, and therefore we can execute
  both tasks at speed~$\tilde{f}$ while keeping the same deadline and
  matching the reliability constraints. By convexity, such an
  execution gives a better energy consumption. We can iterate on all
  the tasks that are replicated, hence obtaining the speed at which
  each task will be re-executed, \freex. This concludes the proof.
\end{proof}

Following Proposition~\ref{prop_WC_fr}, we are able to precisely
define~\freex, and give a closed form expression of the energy of a
schedule. 
\begin{corollary}
  \label{cor.energy.chain}
Given a subset $V_r$ of tasks re-executed or replicated, let 
$X=\sum_{T_i\in V_r} w_i$, and 
\[
  \freex = \left\{
  \begin{array}{l l}
    \max\left(\fmin, \frac{2X}{D\fr-S+X}\fr\right) & \quad \text{if $p=1$};\\
    \max\left(\fmin, \frac{X}{D\fr-S+X}\fr\right) & \quad \text{if $p\geq 2$}.\\
  \end{array} \right. 
\]
Then, if $\freex\geq \max_{T_i\in V_r} \finf$,  
the optimal energy consumption is 
\begin{equation}
(S-X)\fr^2 + 2X \freex^2. 
\end{equation}

Note that the energy consumption only depends on~$X$, and therefore
\chain is equivalent in this case
to the problem of finding the optimal set of tasks that have to be
re-executed or replicated.
\end{corollary}

\begin{proof}
  Given a deadline~$D$, the problem is to find the set of tasks
  re-executed (or replicated), and the speed of each task. Thanks to
  Proposition~\ref{prop_WC_fr}, we know that the tasks that are not in
  this set are executed at speed~\fr, and given the set of tasks
  re-executed or replicated, we can easily compute the optimal speed
  to execute each task in order to minimize the energy consumption:
  all tasks are executed at the same speed, and we have
  $\lambda \frac{X}{\freex} + \frac{S-X}{\fr} = D$, with $\lambda=1$
  in the case of replication ($p\geq 2$), and $\lambda=2$ in the case
  of re-execution ($p=1$). 
  Hence the corollary.
\end{proof}

\begin{remark}
  Note that if there is a task $T_i \in V_r$ such that $\finf >
  \freex$, then the optimal solution for this set of replicated tasks
  is obtained by executing~$T_i$ at speed \finf, and by executing all
  the other tasks at a new speed $\freex^{\texttt{new}} \leq \freex$,
  such that $D$ is exactly met.  We can do this recursively until
  there are no more tasks~$T_i$ such that $\finf >
  \freex^{\texttt{new}}$. Using the procedure \computeVl($V_r$) (see
  Algorithm~\ref{algo.computevl}), we can
  compute the optimal energy consumption in a time polynomial
  in~$|V_r|$. 

\begin{algorithm}[htb]
\caption{Computing re-execution speeds; tasks in $V_r$ are
  re-executed.}
\label{algo.computevl}
procedure {\computeVl}($V_r$)\\
\Begin{
$V_l^{(0)} = \emptyset$\;
$\freex^{(0)} =  \left\{
  \begin{array}{l l}
    \max\left(\fmin, \frac{2X}{D\fr-S+X}\fr\right) & \quad \text{if $p=1$};\\
    \max\left(\fmin, \frac{X}{D\fr-S+X}\fr\right) & \quad \text{if $p\geq 2$}.\\
  \end{array} \right.$\\
$j=0$\;
\While{$j=0$ or $V_l^{(j)} \neq V_l^{(j-1)}$}{
$j:= j+1$\;
$V_l^{(j)} = V_l^{(j-1)} \cup \{T_i\in V_r \;|\; \finf > \freex^{(j-1)} \}$\;
$\freex^{(j)} =  \left\{
  \begin{array}{l l}
    \max\left(\fmin, \frac{\sum_{T_i \in V_r\setminus V_l^{(j)}} 2w_i }{D-\frac{S-X}{\fr} - \sum_{T_i \in V_l^{(j)}}\frac{2w_i}{\finf}}\right) & \text{if $p=1$};\\
    \max\left(\fmin, \frac{\sum_{T_i \in V_r\setminus V_l^{(j)}} w_i }{D-\frac{S-X}{\fr} - \sum_{T_i \in V_l^{(j)}}\frac{w_i}{\finf}}\right) & \text{if $p\geq 2$}.\\
  \end{array} \right.$
}
\Return{$(V_l^{(j)}, \freex^{(j)})$;}
}
\end{algorithm}

Let $(V_l,\freex)$ be the result of \computeVl($V_r$). Then the optimal
energy consumption is
$  (S-X)\fr^2 + \sum_{T_i \in V_l} 2w_i\finf^2 + \sum_{T_i \in V_r
  \setminus V_l} 2w_i \freex^2$ .

\end{remark}

\begin{corollary}
If $D > \frac{S}{\fr}$, \chain can be solved using an exponential time 
exact algorithm.
\end{corollary}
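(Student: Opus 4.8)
The plan is to turn the characterization results of this section into a brute-force enumeration. By Corollary~\ref{cor.energy.chain}, once the subset $V_r \subseteq V$ of tasks that are re-executed or replicated is fixed, an optimal solution is completely determined: every task outside $V_r$ runs at speed~\fr, and the common speed \freex of the tasks in $V_r$ (together with the value $X = \sum_{T_i \in V_r} w_i$) yields the optimal energy $(S-X)\fr^2 + 2X\freex^2$ in closed form. The problem therefore collapses to choosing the right subset $V_r$, and the cost of each candidate subset is computable in polynomial time.

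First I would enumerate all $2^n$ subsets $V_r \subseteq V$, and for each one compute the associated speed \freex and the resulting energy. When the common speed satisfies $\freex \geq \max_{T_i \in V_r} \finf$, the closed-form expression of Corollary~\ref{cor.energy.chain} applies directly. Otherwise, some replicated tasks are constrained by their own minimal safe speed \finf, and I would instead call the procedure \computeVl$(V_r)$ from the preceding remark, which returns the pair $(V_l, \freex)$ and hence the refined optimal energy $(S-X)\fr^2 + \sum_{T_i \in V_l} 2w_i\finf^2 + \sum_{T_i \in V_r \setminus V_l} 2w_i\freex^2$; this call runs in time polynomial in $|V_r|$.

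Second I would discard infeasible subsets. The hypothesis $D > \frac{S}{\fr}$ guarantees that at least one schedule exists (execute every task once at speed~\fr, by Lemma~\ref{lemma_norel}), so the enumeration is never empty. For a given $V_r$, the deadline is met by construction, since \freex is defined so that $D$ is exactly consumed; feasibility then only requires checking that the computed speeds lie within $[\fmin, \fmax]$. Among all feasible subsets I would keep the one minimizing the energy, and output the corresponding schedule: tasks in $V_r$ are replicated if $p \geq 2$ and re-executed if $p=1$, as dictated by Lemma~\ref{chain_reporreex}, with both replicas at equal speed by Lemma~\ref{lemma.speed.chain}.

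The total running time is $2^n$ times a polynomial, hence exponential, which is exactly what the statement claims. The only real subtlety — and the step I expect to require the most care — is the \finf-handling: one must verify that the \computeVl refinement genuinely produces the \emph{energy-optimal} speed assignment for each fixed $V_r$, and not merely a feasible one, so that minimizing over subsets indeed returns the global optimum. Everything else is a direct consequence of the characterization already established, since the reduction to ``find the optimal $V_r$'' is precisely the content of Corollary~\ref{cor.energy.chain}.
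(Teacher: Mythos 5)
Your proposal is correct and matches the paper's own argument: the paper likewise enumerates all $2^n$ subsets $V_r \subseteq V$, computes the resulting energy via the closed form of Corollary~\ref{cor.energy.chain} (with the \computeVl\ refinement from the preceding remark handling the $\finf$-constrained cases), and returns the subset of minimal energy, which is optimal precisely because the characterization reduces \chain to choosing $V_r$. Your additional remarks on feasibility and on verifying the optimality of the \computeVl\ speed assignment are careful elaborations of points the paper treats as already established in the remark, not a different route.
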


\begin{proof}
  The algorithm computes for every subset $V_r$ of tasks the energy
  consumption if all tasks in this subset are re-executed, and it
  chooses one with the minimal energy consumption, that corresponds to
  an optimal solution. It takes exponential time to compute every
  subset~$V_r\subseteq V$, with $|V|=n$. 
\end{proof}

Thanks to Corollary~\ref{cor.energy.chain}, we are also able to identify
problem instances that can be solved in polynomial time.

\begin{theorem}
  \label{thm.chain}
 \chain can be solved in polynomial time in the
 following cases: 
\begin{enumerate}
    \item $D \leq \frac{S}{\fr}$ (no re-execution nor replication);
    \item $p=1$, $D \geq \frac{1+c}{c}\frac{S}{\fr}$, where $c$~is the 
      only positive
      solution to the polynomial $7X^3 + 21 X^2 - 3X -1 = 0$, and
      hence $c = 4 \sqrt{\frac{2}{7}} \cos{\frac{1}{3}(\pi-\tan^{-1}{\frac{1}{\sqrt{7}}})}-1 ~ (\approx 0.2838)$, and 
      for $1\leq i\leq n$, $\finf \leq \frac{2c}{1+c}\fr$ (all
      tasks can be re-executed);
    \item $p \geq 2$, $D \geq 2\frac{S}{\fr}$, and 
      for $1\leq i\leq n$, $\finf \leq \frac{1}{2}\fr$ (all tasks can
      be replicated).
\end{enumerate}
\end{theorem}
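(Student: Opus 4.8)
The plan is to lean on Corollary~\ref{cor.energy.chain}: for $D>\frac{S}{\fr}$ it collapses \chain to the choice of the set $V_r$ of re-executed (or replicated) tasks, with optimal energy depending only on $X=\sum_{T_i\in V_r}w_i$ through $E(X)=(S-X)\fr^2+2X\freex(X)^2$. Hence it suffices, in each case, to pin down the optimal $X$; once it is forced to be $0$ or $S$ the set $V_r$ is determined, no search over the exponentially many subsets is needed, and the speeds follow in polynomial time from Corollary~\ref{cor.energy.chain} (or from \computeVl, Algorithm~\ref{algo.computevl}, when some $\finf$ lies above $\freex$). I would treat the three cases as: $X=0$ optimal (case~1), $X=S$ optimal (cases~2 and~3).

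For case~1, Corollary~\ref{cor.energy.chain} does not apply since $D\le\frac{S}{\fr}$, so I would argue directly. A single execution of a task of weight $w$ occupying time $\tau$ costs $w^3/\tau^2$, while any double execution in the same time costs at least twice that; therefore the energy of any feasible schedule is at least $\min_{\sum_i\tau_i\le D}\sum_i w_i^3/\tau_i^2$. A one-line Lagrange computation gives the minimiser $\tau_i\propto w_i$ and value $S^3/D^2$, which is attained by the single-speed schedule of Lemma~\ref{lemma_norel} running at speed $S/D\ge\fr$ (feasible precisely because $D\le\frac{S}{\fr}$). So neither re-execution nor replication helps and case~1 reduces to Lemma~\ref{lemma_norel}.

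For cases~2 and~3 I would show $E(X)$ is non-increasing on $[0,S]$, so its minimum sits at $X=S$ (all tasks re-executed/replicated). Substituting $\freex(X)$ from Corollary~\ref{cor.energy.chain} and writing $u=D\fr-S+X$, the sign of $E'(X)$ is that of $2\lambda^2 X^2(3u-2X)-u^3$, with $\lambda=2$ for re-execution ($p=1$) and $\lambda=1$ for replication ($p\ge2$). The change of variable $y=X/(D\fr-S)$ turns this into a fixed cubic in $y$. For $p\ge2$ it factors as $(1-y)(y^2+4y+1)$ up to a positive scale, so $E'\le0$ iff $y\le1$, i.e. $X\le D\fr-S$; the hypothesis $D\ge\frac{2S}{\fr}$ makes this hold throughout $[0,S]$, and $\finf\le\frac{1}{2}\fr$ guarantees that the resulting speed $\freex=\frac{S}{D}\le\frac{1}{2}\fr$ still meets every reliability floor, so $X=S$ is feasible and optimal. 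For $p=1$ the same computation yields, up to a positive scale, the cubic $7y^3+21y^2-3y-1$, whose unique positive root is exactly the constant $c$ of the statement; thus $E'\le0$ iff $y\le c$, which holds for all $X\in[0,S]$ exactly when the largest value $\frac{S}{D\fr-S}\le c$, i.e. when $D\ge\frac{1+c}{c}\frac{S}{\fr}$. The hypothesis $\finf\le\frac{2c}{1+c}\fr$ then matches the full-re-execution speed $\freex=\frac{2S}{D}\le\frac{2c}{1+c}\fr$, so the reliability floors are respected (or, where $\finf>\freex$, \computeVl rescales the speeds without changing $V_r=V$).

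The delicate step is case~2. Because re-execution on a single processor consumes makespan, slowing a re-executed task forces the remaining tasks to speed up, so the trade-off is genuinely two-sided and the threshold is an irrational root of a cubic rather than a clean fraction. The crux is to verify that the sign of $E'(X)$ over the whole interval $[0,S]$---not merely at the endpoint $X=S$---is governed by $7y^3+21y^2-3y-1$, and that $D\ge\frac{1+c}{c}\frac{S}{\fr}$ is exactly the condition keeping this polynomial of the right sign throughout; the endpoint identity $\frac{D\fr}{S}=\frac{1+c}{c}\Leftrightarrow 7c^3+21c^2-3c-1=0$ together with the global monotonicity are what require care. Everything else is bookkeeping through Corollary~\ref{cor.energy.chain} and Algorithm~\ref{algo.computevl}.
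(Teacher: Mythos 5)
Your proof is correct and, for the two main cases, follows essentially the same route as the paper: reduce via Corollary~\ref{cor.energy.chain} to the one-variable function $E(X)=(S-X)\fr^2+2X\freex(X)^2$, compute the sign of $E'(X)$, and observe that after the substitution $y=X/(D\fr-S)$ it is governed by the cubic $y^3+3y^2-3y-1$ for $p\geq 2$ and by $7y^3+21y^2-3y-1$ (unique positive root $c$) for $p=1$, so that the hypotheses on $D$ force the optimum to $X=S$, i.e., $V_r=V$. Two points where you genuinely differ, both defensible: (i) for case~1 the paper simply invokes a result from prior work together with an ``easy to see'' remark that duplication only hurts, whereas you give a self-contained lower bound $\sum_i w_i^3/\tau_i^2$ over windows with $\sum_i\tau_i\leq D$, a Lagrange computation giving $\tau_i\propto w_i$ and value $S^3/D^2$, matched by the single-speed schedule of Lemma~\ref{lemma_norel}; (ii) the paper only locates the unique positive critical point of $E$ and asserts the minimum sits there, while you explicitly verify $E'\leq 0$ on all of $[0,S]$ --- which is the statement actually needed, since the achievable values of $X$ are subset sums and one must know that larger is always better on the whole interval, not merely that $c(D\fr-S)$ (resp.\ $D\fr-S$) is a critical point. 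Your handling of the degenerate situation $\finf>\freex$ via \computeVl\ (keep $V_r=V$, re-execute the offending tasks at $\finf$, justified by $\finf\leq\frac{2c}{1+c}\fr$, resp.\ $\finf\leq\frac12\fr$) matches the paper's equally brief argument. One sign slip to fix: for $p\geq 2$ the sign of $E'$ is that of $(y-1)(y^2+4y+1)$, not $(1-y)(y^2+4y+1)$; your stated conclusion ($E'\leq 0$ iff $y\leq 1$) is nevertheless the correct one.
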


\begin{proof}
  First note that when $D \leq \frac{S}{\fr}$, the optimal solution is
  to execute each task only once, at speed $\frac{S}{D}$, since $S/D
  \geq \fr$. Indeed, this solution matches both reliability and
  makespan constraints, and it was proven to be the optimal solution
  in Proposition~2 by \cite{aupy12ccpe} (it is easy to see that
  replication or re-execution would only increase the energy
  consumption).

  Let us now consider that $D>\frac{S}{\fr}$. We aim at showing
  that the minimum of the energy function is reached when the total
  weight of the re-executed or replicated tasks is
\[
\left\{
  \begin{array}{l l}
     c(D\fr - S)   & \quad \text{if $p=1$};\\
     (D\fr- S)     &  \quad \text{if $p\geq 2$}.\\
  \end{array} \right.
\]
Then necessarily, when this total weight is greater than~$S$, the
optimal solution is to re-execute or replicate all the tasks. Hence
the theorem. We differentiate the two cases in the following ($p=1$ or
$p=2$).

\paragraph{Case 1 ($p=1$). }
We want to show that the minimum energy is reached when the total
weight of the subset of tasks is exactly $c(D\fr -S)$. Let $I=\{i\;
|\; T_i$ is executed twice in the solution$\}$, and let
$X=\sum_{i\in I}a_i$.

We saw in Corollary~\ref{cor.energy.chain} that the energy consumption
cannot be lower than $(S-X)\fr^2 + 2X \freex^2$ where $\freex =
\frac{2X}{D\fr-S+X}\fr$.  Therefore, we want to minimize $E(X) =
(S-X)\fr^2 + 2X \left( \frac{2X}{D\fr-S+X}\fr \right)^2$.

If we differentiate $E$, we can see that the minimum is reached when 
$-1 + \frac{24X^2}{(D\fr - S + X)^2} - \frac{16X^3}{(D\fr - S + X)^3}
= 0$,
 that is, 
 $-(D\fr - S + X)^3 + 24X^2(D\fr - S + X) - 16X^3 = 0$, or 
\begin{align*}
7X^3 +& 21(D\fr -S)X^2  - 3(D\fr -S)^2X -(D\fr -S)^3 = 0.
\end{align*} 
The only positive solution to this equation is $X = c(D\fr -S)$, and
therefore the minimum is reached for this value of $X$, and
then $\freex=\frac{2c}{1+c}\fr$. 

When $X\geq S$, re-executing each task is the best strategy to
minimize the energy consumption, and that corresponds to the case $D
\geq \frac{1+c}{c}\frac{S}{\fr}$. The re-execution speed may then be
lower than $\frac{2c}{1+c}\fr$. Therefore, it may happen that $\finf >
\freex$ for some task~$T_i$. However, even with a tighter deadline, it
would be better to re-execute~$T_i$ at speed $\frac{2c}{1+c}\fr$
rather than to execute it only once at speed~\fr. Therefore, since
$\finf \leq \frac{2c}{1+c}\fr$, it is optimal to re-execute~$T_i$, at
the lowest possible speed, i.e.,~\finf. Note that this changes the
value of \freex, and the call to \mbox{\computeVl($V$)} (see
Algorithm~\ref{algo.computevl}) returns tasks that are executed
at~\finf, together with the re-execution speed for all the other tasks.


\paragraph{Case 2 ($p\geq 2$). }
Similarly, we want to show that, in this case, the minimum energy is
reached when the total weight of the subset of tasks that are
replicated is exactly $D\fr-S$. Let $I=\{i\; |\; T_i$ is executed
twice in the solution$\}$, and let $X=\sum_{i\in I}a_i$.

We saw in Corollary~\ref{cor.energy.chain} that the energy consumption
cannot be lower than $(S-X)\fr^2 + 2X \freex^2$ where $\freex =
\frac{X}{D\fr-S+X}\fr$.  Therefore, we want to minimize $E(X) =
(S-X)\fr^2 + 2X \left( \frac{X}{D\fr-S+X}\fr \right)^2$.

If we differentiate $E$, we can see that the minimum is reached when 
$$-1 + \frac{6X^2}{(D\fr - S + X)^2} - \frac{4X^3}{(D\fr - S + X)^3} =
0,$$
that 
is, $-(D\fr - S + X)^3 + 6X^2(D\fr - S + X) - 4X^3 = 0$, or 
\begin{align*}
X^3 +& 3(D\fr -S)X^2 - 3(D\fr -S)^2X -(D\fr -S)^3 = 0.
\end{align*}
The only positive solution to this equation is $X = D\fr -S$, and
therefore the minimum is reached for this value of~$X$, and then 
$\freex=\frac{1}{2}\fr$.  

When $X\geq S$, replicating each task is the best strategy to 
minimize the energy consumption, and that corresponds to the case 
$D \geq \frac{2S}{\fr}$. Similarly to Case~1, it is easy to see that
each task should be replicated, even if $\finf>\freex$, since \mbox{$\finf
\leq \frac{1}{2}\fr$.} The optimal solution can also be obtained with a
call to \computeVl($V$). 
\end{proof}

\subsection{FPTAS for \chain}
\label{lin.fptas}

We derive in this section a fully polynomial time approximation scheme
(FPTAS) for \chain, based on the FPTAS for SUBSET-SUM \cite{cormen},
and the results of Section~\ref{lin.char}. Without loss of generality,
we use the term {\em replication} for either re-execution or
replication, since both scenarios have already been clearly
identified. The problem consists in identifying the set of replicated
tasks~$V_r$, and then the optimal solution can be derived from
Corollary~\ref{cor.energy.chain}; it depends only on the total weight
of these tasks, $\sum_{T_i\in V_r} w_i$, denoted in the following
as~$w(V_r)$.

Note that we do not account in this section for \finf or \fmin for
readability reasons: \finf can usually be neglected because $\lambda_0
w_i/f$ is supposed to be very small whatever~$f$, and \fmin simply
adds subcases to the proofs (rather than an execution at speed~$f$,
the speed should be $\max(f,\fmin)$).

\medskip
First we introduce a few preliminary functions in
Algorithm~\ref{algo.prel}, and we exhibit their properties. These are
the basis of the approximation algorithm. 

When $D > \frac{S}{\fr}$, \xopt($V,D,p$) returns the optimal value for
the weight~$w(V_r)$ of the subset of replicated tasks~$V_r$, i.e.,
the value that minimizes the energy consumption for \chain. 
The optimality comes directly from the proof of Theorem~\ref{thm.chain}. 


Given a value~$X$, which corresponds to $w(V_r)$, 
\energy($V,D,p,X$) returns the optimal energy consumption when a
subset of tasks~$V_r$ is replicated. 

Then, the function \trim($L,\varepsilon,X$) trims a sorted list
$L=[L_0, \cdots, L_{m-1}]$ in time~$O(m)$, given $L$ and
$\varepsilon$. $L$ is sorted into non decreasing order. The function
returns a trimmed list, where two consecutive elements differ from at
least a factor $(1+\varepsilon)$, except the last element, that is the
smallest element of~$L$ strictly greater than~$X$. This trimming
procedure is quite similar to that used for SUBSET-SUM \cite{cormen},
except that the latter keeps only elements lower than~$X$. Indeed,
SUBSET-SUM can be expressed as follows: given $n$ strictly positive
integers $a_1, \ldots, a_n$, and a positive integer~$X$, we wish to
find a subset $I$ of $\{1,\ldots, n\}$ such that \mbox{$\sum_{i\in I}a_i$}
is as large as possible, but not larger than~$X$.  
 In our case, the optimal solution may be
obtained either by approaching~$X$ by below or by above.   

Finally, the approximation algorithm is \approxchain$(V,\ab D,p,\varepsilon)$
(see Algorithm~\ref{algo.prel}),
where 
\mbox{$0<\varepsilon<1$}, and it returns an energy consumption~$E$ that is
not greater than \mbox{($1+\varepsilon$)} times the optimal energy
consumption. 
Note that if $L=[L_0,\ldots,L_{m-1}]$, then \add($L,x$) adds
element~$x$ at the end of list~$L$ (i.e., it returns the list
$[L_0,\ldots,L_{m-1},x]$); $L+w$ is the list
$[L_0+w,\ldots,L_{m-1}+w]$; and \merge($L,L'$) is merging two sorted
lists (and returns a sorted list).

\begin{algorithm}
\caption{Approximation algorithm for \chain.}
\label{algo.prel}
function {\xopt}($V,D,p$)\\
\Begin{
$S=\sum_{T_i \in V} w_i$\;
\lIf{$p=1$}{\Return{$c(D\fr-S)$}\;}
\lElse{\Return{$D\fr-S$}\;}
}

function {\energy}($V,D,p,X$)\\
\Begin{
$S=\sum_{T_i \in V} w_i$\;
\lIf{$p\!=\!1$}{\Return{$(S\!-\!X)\fr^2 \!+\! 2X\!\left(\max \left(\fmin, \frac{2X}{D\fr-S+X}\fr\right)\right)^2$}\!\!\;}
\lElse{\Return{$ (S-X)\fr^2 + 2X\left(\max \left(\fmin, \frac{X}{D\fr-S+X}\fr\right)\right)^2$}\;}
}

function {\trim}($L,\varepsilon,X$)\\
\Begin{
$m = |L|$; 
$L=[L_0,\ldots,L_{m-1}]$; 
$L'= [L_0]$;
$last = L_0$\;
\For{$i=1$ \KwTo $m-1$}{
\If{ ($last\leq X$ and $L_i > X$) or $L_i > last\times (1+\varepsilon)$}
{$L' = \add(L',L_i)$;
$last = L_i$\;}
}
\Return{$L'$;} 
}


function \approxchain($V,D,p,\varepsilon$)\\
\Begin{
$X = \lfloor \xopt(V,D,p) \rfloor$; 
$n=|V|$;
$L^{(0)} = [0]$\;
    \For{$i=1$ \KwTo $n$}{
$L^{(i)} = \merge(L^{(i-1)}, L^{(i-1)}+w_i)$\;
$L^{(i)} = \trim(L^{(i)},\varepsilon / (28\times 2n), X)$\;
}
Let $Y_1 \leq Y_2$ be the two largest elements of $L^{(n)}$;  \\
    \Return{$\min(\energy(V,D,p,Y_1),\energy(V,D,p,Y_2))$;}
}
\end{algorithm}

We now prove that this approximation scheme is an FPTAS: 

\begin{theorem}
    \label{approx.chain}
    \approxchain\xspace is a fully polynomial time approximation
    scheme for \chain.
\end{theorem}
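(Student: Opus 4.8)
The plan is to verify the two defining requirements of an FPTAS separately: a running time polynomial in both $n$ and $1/\varepsilon$, and a returned energy within a factor $1+\varepsilon$ of the optimum. Throughout I would lean on Corollary~\ref{cor.energy.chain} and Theorem~\ref{thm.chain}, which (in the regime $D>\frac{S}{\fr}$) reduce \chain to choosing a subset of tasks whose total weight $X=w(V_r)$ is as close as possible to the target $X^\star=\xopt(V,D,p)$. Indeed $\energy(V,D,p,X)$ is a convex function of $X$ whose unconstrained minimum sits at $X^\star$, so the best achievable energy is obtained either at the largest realizable subset sum not exceeding $X^\star$ or at the smallest one exceeding it. This is exactly a two-sided SUBSET-SUM problem, and \approxchain is the matching trimming-based dynamic program, which is why the two largest surviving elements $Y_1\le Y_2$ of $L^{(n)}$ (bracketing $X$) are the natural candidates to evaluate.

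For the running time I would argue as in the SUBSET-SUM FPTAS of \cite{cormen}: after each call to \trim, any two consecutive surviving elements differ by a factor at least $1+\frac{\varepsilon}{56n}$, so each list $L^{(i)}$ has at most $O\!\left(\frac{\log S}{\log(1+\varepsilon/56n)}\right)=O\!\left(\frac{n\log S}{\varepsilon}\right)$ elements, plus the single ``crossing'' element retained just above $X$. Since \merge and \trim run in time linear in the list length and there are $n$ iterations, the whole algorithm is polynomial in $n$, $\log S$ and $1/\varepsilon$, hence in the input size and in $1/\varepsilon$.

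For the approximation guarantee I would first prove the standard trimming invariant by induction on $i$: for every subset sum $z$ realizable with the first $i$ tasks there is a surviving $y\in L^{(i)}$ with $y\le z\le (1+\frac{\varepsilon}{56n})^{i}y$, while the special clause of \trim keeps the smallest realizable sum exceeding $X$ on the upper side. Taking $i=n$ and using $(1+\frac{\varepsilon}{56n})^{n}\le e^{\varepsilon/56}\le 1+\frac{\varepsilon}{28}$ for $0<\varepsilon<1$, the elements $Y_1\le Y_2$ approximate, to within a factor $1+\frac{\varepsilon}{28}$, the optimal realizable weight $\hat X$ on the relevant side of $X^\star$ (the degenerate case $X^\star\ge S$, where everything is replicated, is handled automatically since then all sums lie below $X$). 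The remaining and genuinely delicate step is to convert this multiplicative closeness in the \emph{weight} into multiplicative closeness in the \emph{energy}.

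The main obstacle is precisely this conversion, because $\energy(V,D,p,\cdot)$ is not monotone (decreasing on $[0,X^\star]$, increasing afterwards) and must be controlled uniformly over the whole feasible range and over all instances. Writing $E(X)=\fr^2\big((S-X)+\frac{2X^3}{(D\fr-S+X)^2}\big)$ for $p\ge2$ (and the analogous expression for $p=1$), I would bound its logarithmic derivative, i.e. the elasticity $\big|\frac{X\,E'(X)}{E(X)}\big|$, by an absolute constant; scale-invariance of this quantity lets one reduce to $D\fr-S=1$ and check that $E$ stays bounded away from $0$ while $X|E'|$ grows no faster than $E$, including the limits $X\to0$, $X=S$ and the kinks introduced by $\max(\cdot,\fmin)$ and by \finf. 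This constant is what fixes the $28$ appearing in the trimming accuracy $\varepsilon/(28\times 2n)$: combined with the weight error $1+\frac{\varepsilon}{28}$ above it yields $E(Y)\le(1+\varepsilon)E(\hat X)$, so that $\min(\energy(V,D,p,Y_1),\energy(V,D,p,Y_2))$ is within $1+\varepsilon$ of the optimum. Establishing this elasticity bound cleanly and uniformly is the crux; the rest is the usual SUBSET-SUM bookkeeping.
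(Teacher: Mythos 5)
Your combinatorial skeleton (reduction to a two-sided SUBSET-SUM via Corollary~\ref{cor.energy.chain}, trimming at per-step ratio $1+\varepsilon/(56n)$, evaluating the two largest survivors $Y_1,Y_2$, and the standard polynomiality bookkeeping) coincides with the paper's proof. The genuine gap is in your conversion of weight error into energy error. The elasticity $\left|\frac{X\,E'(X)}{E(X)}\right|$ is \emph{not} bounded by an absolute constant over all instances: normalizing $D\fr-S=1$, for $p\geq 2$ one has $E(X)/\fr^2=S-X+\frac{2X^3}{(1+X)^2}$, and at $X=S$ with $S\to 0$ (i.e., $D\fr-S\gg S$, a long deadline) one gets $E(S)/\fr^2\approx 2S^3$ while $|E'(S)|/\fr^2\approx 1$, so the elasticity is of order $\frac{1}{2S^2}\to\infty$. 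Concretely, in this regime the optimal sum is $X_o=S$ (replicate everything), \trim may drop the full-set sum $S$ and leave only a $Y_2$ with $Y_2\leq S\leq(1+\varepsilon')Y_2$, and then $E(Y_2)-E(S)\approx \varepsilon' S\fr^2$ can exceed $E(S)\approx \frac{2S^3\fr^2}{(D\fr-S+S)^2}$ by an arbitrarily large factor. So your parenthetical claim that the case $X^\star\geq S$ is ``handled automatically'' is precisely where the argument fails, and your promised verification that ``$E$ stays bounded away from $0$'' after normalization is false when $S$ remains a free parameter.

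The fix --- and what the paper actually does --- is to invoke Theorem~\ref{thm.chain} \emph{first}, disposing of all long deadlines exactly in polynomial time, and to run \approxchain only when $\frac{S}{\fr}<D<\frac{1+c}{c}\frac{S}{\fr}<5\frac{S}{\fr}$ (for $p=1$) or $\frac{S}{\fr}<D<2\frac{S}{\fr}$ (for $p\geq 2$). On that restricted range the paper needs no elasticity bound at all: substituting $X_a\leq X_o\leq(1+\varepsilon')X_a$ into the closed-form energy gives, by direct manipulation, $\frac{E_a}{\fr^2}\leq(1+\varepsilon')^2\frac{E_{opt}}{\fr^2}+\varepsilon' S$, and the additive term $\varepsilon' S$ is absorbed using the reliability-free lower bound $E_{opt}\geq S\left(\frac{S}{D}\right)^2$ together with $D<\frac{5S}{\fr}$, so that $S<\frac{25E_{opt}}{\fr^2}$ and $\frac{E_a}{E_{opt}}<1+27\varepsilon'+\varepsilon'^2<1+28\varepsilon'=1+\varepsilon$; this is where the constant $28$ comes from, not from any elasticity estimate. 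Your elasticity program could likely be completed on the restricted regime (there $D\fr-S$ is comparable to $S$, the relevant energies are bounded below by a constant times $S\fr^2$, and the elasticity is indeed uniformly bounded), but as written, without the explicit regime restriction and the dispatch to Theorem~\ref{thm.chain}, the uniform bound your crux step relies on simply does not exist.
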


\begin{proof}
We assume that
\begin{compactitem}
    \item if $p=1$, then $\frac{S}{\fr}<D< \frac{1+c}{c}\frac{S}{\fr}
      <5\frac{S}{\fr}$; 
    \item if $p\geq2$, then $\frac{S}{\fr}<D< 2\frac{S}{\fr}$; 
\end{compactitem}
otherwise the optimal solution is obtained in polynomial time (see
Theorem~\ref{thm.chain}).  





Let $I_{\inf} = \{V' \subseteq V\; |\; w(V') \leq
      \xopt(V,D,p) \}$, and 
$I_{\sup} = \{V'' \subseteq V\; |\; w(V'') >
      \xopt(V,D,p) \}$. 
Note that $I_{\inf}$ is not empty, since $\emptyset \in I_{\inf}$. 

\smallskip
First we characterize the solution with the following lemma: 

\begin{lemma}
    \label{lemma.i1i2}
    Suppose $D > \frac{S}{\fr}$.  Then in the solution of \chain, the
    subset of replicated tasks~$V_r$ is either an element
    $V'\in I_{\inf}$ such that $w(V')$ is maximum, or an element
    $V''\in I_{\sup}$ such that $w(V'')$ is minimum. 
\end{lemma}

\begin{proof}
  Recall first that according to Proposition~\ref{prop_WC_fr}, the
  energy consumption of a linear chain is not dependent on the number
  of tasks replicated, but only on the sum of their weights. 

  Then the lemma is obvious by convexity of the functions, and 
  since \xopt\xspace returns the
  optimal value of~$w(V_r)$, the weight of the replicated tasks.
  Therefore, the closest the weight of the set of replicated tasks is
  to the optimal weight, the better the solution is. Finally, 
  any element in $I_{\inf}$ is a solution (since we have a solution
  for \xopt), and if the minimal element (if it exists) of $I_{\sup}$
  is not a solution, (\freex too large because of time constraints),
  then no element of $I_{\sup}$ can be a better solution. 
\end{proof}


We are now ready to prove Theorem~\ref{approx.chain}.  
Let $X_1=\max_{V_1\in I_{\inf}} w(V_1)$, and 
 $X_2=\max_{V_2\in I_{\sup}} w(V_2)$.  
Thanks to Lemma~\ref{lemma.i1i2}, the optimal set of replicated
tasks~$V_o$ is such that $X_o=w(V_o)=X_1$ or $X_o=X_2$. 
The corresponding energy consumption is (Corollary~\ref{cor.energy.chain}): 
\[
  E_{opt} = \left\{
  \begin{array}{l l}
    (S-X_o)\fr^2 + \frac{(2X_o)^3}{(D\fr-S+X_o)^2}\fr^2 & \text{if $p=1$}\\
    (S-X_o)\fr^2 + \frac{2X_o^3}{(D\fr-S+X_o)^2}\fr^2 & \text{if $p\geq 2$}\\
  \end{array} \right.  . 
\]

The solution returned by \approxchain\xspace 
corresponds either to~$Y_1$ or to~$Y_2$, where $Y_1$ and~$Y_2$ are the
two largest elements of the trimmed list. We first prove that at least
one of these two elements, denoted~$X_a$, is such that  
$X_a \leq X_o \leq (1+\varepsilon')X_a$, where $\varepsilon'=
\frac{\varepsilon}{28}$. 


\paragraph{Existence of $X_a$ such that $X_a \leq X_o \leq (1+\varepsilon')X_a$. }
We differentiate two cases. 

\begin{description}
\item[(a)] If $Y_2 > X$, then $Y_1$ is the value obtained by the FPTAS
  for SUBSET-SUM \cite{cormen} with the approximation
  ratio~$\varepsilon'$, since it is the largest value not greater
  than~$X$, and our algorithm is identical for such values.  Moreover,
  note that $X_1$~is the optimal solution of SUBSET-SUM by definition,
  and therefore $Y_1 \leq X_1<(1+\varepsilon')Y_1$.  If $X_o=X_1$, the
  value $X_a=Y_1$ satisfies the property.  

  If $X_o=X_2$, we prove that the property remains valid, by
  considering the SUBSET-SUM problem with a bound~$X_2$ instead
  of~$X$. Then, since $Y_2>X$, we have $Y_2\geq X_2$ by definition
  of~$X_2$. Moreover, \approxchain\xspace is not removing any element
  of the list greater than~$Y_2$, and therefore all elements between
  $X$ and~$X_2$ are kept, similarly to the FPTAS for SUBSET-SUM. 
  If $Y_2=X_2$, then $X_a=Y_2$ satisfies the property. Otherwise,
  $Y_1$ is the result of the FPTAS for SUBSET-SUM with a bound~$X_2$,
  whose optimal solution is~$X_2$, and therefore $Y_1$ is such that 
  $Y_1 \leq X_2 <(1+\varepsilon')Y_1$; $X_a=Y_1$ satisfies the
  property.  


\item[(b)] If $Y_2 \leq X$, no elements greater than~$X$ have been
  removed from the lists, and \approxchain\xspace has been identical
  to the FPTAS for SUBSET-SUM. Then, $X_a=Y_2$ is the solution, that
  is valid both for SUBSET-SUM applied with the original bound~$X$
  (optimal solution~$X_1$), and with the modified bound $X_2$ (optimal
  solution~$X_2$). Therefore, $Y_2 \leq X_1<(1+\varepsilon')Y_2$ and $Y_2 \leq
  X_2<(1+\varepsilon')Y_2$, which concludes the proof. 
\end{description}

We have shown that there always is $X_a$ (either $Y_1$ or $Y_2$) such
that $X_a\leq X_o <(1+\varepsilon')X_a$.  Next, we show that the
energy $E_a$ obtained with this value~$X_a$ is such that
$E_{opt}\leq E_a \leq (1+\varepsilon)E_{opt}$.

\paragraph{Approximation ratio on the energy:  
 $E_a \leq (1+\varepsilon)E_{opt}$.} 


Let us consider first that $p\geq 2$. Then we have 
$E_a =   (S-X_a)\fr^2 + \frac{2X_a^3}{(D\fr-S+X_a)^2}\fr^2$.  
Re-using the previous inequalities on $X_a$, we obtain: 
$\frac{E_{a}}{\fr^2} \leq  S-\frac{X_o}{1+\varepsilon'}  +
\frac{2X_o^3}{(D\fr-S+\frac{X_o}{1+\varepsilon'})^2}$. 
Then, this can be rewritten so that $E_{opt}$ appears: 
\begin{align*}
\frac{E_{a}}{\fr^2} 
&\leq \left(\frac{1}{1+\varepsilon'}(S-X_o)+\frac{\varepsilon'}{1+\varepsilon'}S \right ) \\
& \quad {}+\left (
  (1+\varepsilon')^2\frac{2X_o^3}{((1+\varepsilon')(D\fr-S)+X_o)^2}
\right )
\end{align*}
\begin{align*}
\frac{E_{a}}{\fr^2} 
&\leq \left((S-X_o) + \varepsilon' S \right ) \\
& \quad{} + \left((1+\varepsilon')^2\frac{2X_o^3}{(D\fr-S+X_o)^2} \right ) \\
&\leq \left((S-X_o) + \varepsilon' S \right ) \\
& \quad{} + \left( (1+\varepsilon')^2(\frac{E_{opt}}{\fr^2} - (S-X_o)) \right )\\
&\leq (1+\varepsilon')^2 \frac{E_{opt}}{\fr^2} \\
& \quad{}- ((1+\varepsilon')^2 - 1 )(S-X_o) + \varepsilon' S \\
&\leq (1+\varepsilon')^2 \frac{E_{opt}}{\fr^2} + \varepsilon' S . 
\end{align*}

The case $p=1$ leads to the same inequality; the only difference is in
the energy~$E_a$, where $2X_a^3$ is replaced by $(2X_a)^3$, and the
same difference holds for $E_{opt}$ ($2X_o^3$ is replaced by
$(2X_o)^3$). 

Finally, note that with no reliability constraints, each task is
executed only once at speed $S/D$, and therefore the energy
consumption is at least $E_{opt}\geq S\frac{S^2}{D^2}$. Moreover, by
hypothesis, $D<\frac{5S}{\fr}$ (for $p\geq 1$). Therefore,
$S<\frac{25E_{opt}}{\fr^2}$ and $\frac{E_{a}}{\fr^2} < 
(1+\varepsilon')^2 \frac{E_{opt}}{\fr^2} + \varepsilon'
\frac{25E_{opt}}{\fr^2}$.

\medskip
We conclude that 
\begin{equation*}
\frac{E_{a}}{E_{opt}}< 1 + 27 \varepsilon' + \varepsilon'^2 < 1+28\varepsilon' = 1+\varepsilon.
\end{equation*}

\paragraph{Conclusion. }
The energy consumption returned by \approxchain\xspace, denoted as
$E_{algo}$, is such that $E_{algo} \leq E_a$, since we take the minimum
out of the consumption obtained for $Y_1$ or~$Y_2$, and $X_a$ is
either $Y_1$ or~$Y_2$. Therefore, $E_{algo}\leq
(1+\varepsilon)E_{opt}$. 

It is clear that the algorithm is polynomial both in the size of the
instance and in~$\frac{1}{\varepsilon}$, given that the trimming
function and \approxchain\xspace have the same complexity as in the
original approximation scheme for SUB\-SET-SUM (see~\cite{cormen}), and
all other operations are polynomial in the problem size (\xopt,
\energy). 
\end{proof}

\section{Independent tasks}
\label{sec.indep}
    
In this section, we focus on the problem of scheduling independent
tasks, \indep. Similarly to \chain, we know that \indep is NP-hard,
even on a single processor. We first prove in
Section~\ref{sec.inapprox} that there exists no constant factor
approximation algorithm for this problem, unless P=NP. We discuss and
characterize solutions to \indep in Section~\ref{char.indep}, while
highlighting the intrinsic difficulty of the problem. The core result
is a constant factor approximation algorithm with a relaxation on the
constraint on the makespan (Section~\ref{algo.indep}).


\subsection{Inapproximability of \indep}
\label{sec.inapprox}

\begin{lemma}
    \label{indep.inapprox}
For all $\lambda >1$, there does not exist any $\lambda$-approxi\-mation
of \indep, unless $P=NP$. 
\end{lemma}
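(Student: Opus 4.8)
The plan is to prove inapproximability via a reduction from an NP-hard problem, exploiting the fact that energy costs scale quadratically (indeed, like $f^2$ where speed depends on load) so that the optimal energy can be driven arbitrarily close to a "critical" value. The natural candidate is a reduction from \partition (or equivalently SUBSET-SUM): given integers $a_1,\dots,a_n$ summing to $2B$, decide whether some subset sums to exactly $B$. The key idea is to engineer an instance of \indep where a \emph{perfect} balance of load across processors (corresponding to a YES instance of \partition) yields one finite optimal energy value, whereas \emph{any imbalance} (a NO instance) forces at least one processor to run faster, and — crucially — forces the makespan constraint to be violated unless some task is replicated or re-executed, which would then cost unboundedly more energy. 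The goal is to make the energy gap between YES and NO instances unbounded, so that \emph{no} multiplicative factor $\lambda$ can separate them.

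\textbf{The construction I would attempt.} First I would set $p=2$ processors and create tasks with weights $w_i = a_i$, choosing the deadline $D$ to be exactly the makespan achievable when each processor carries load $B$ at speed $\fr$ (so $D = B/\fr$). In a YES instance, the optimal solution runs every task once at speed~$\fr$, each processor finishing exactly at the deadline, giving energy $E_{\textrm{yes}} = \sum_i w_i \fr^2 = 2B\fr^2$. In a NO instance, \emph{no} assignment balances the two processors, so one processor must carry load strictly greater than $B$; to meet the deadline $D=B/\fr$ it must run faster than $\fr$, or (to respect reliability) we are forced into a regime where the only way to satisfy both the deadline and the reliability constraint is impossible at reasonable energy. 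The delicate point is to arrange the parameters (in particular $D$ relative to $\fr$ and the quantization of the $a_i$) so that the NO case either has \emph{no} feasible schedule of bounded energy, or has optimal energy exceeding $\lambda \cdot E_{\textrm{yes}}$ for the prescribed~$\lambda$.

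\textbf{The cleanest route} is probably to make NO instances \emph{infeasible at any finite energy within the factor}: if any processor must run above some threshold speed while another cannot compensate, the quadratic blowup $E \propto f^2$ with $f$ forced up by integrality means the energy ratio grows without bound as we scale the instance, or alternatively we use a gap-amplification trick by scaling all weights by a large factor $K$. Concretely, given a target $\lambda$, I would scale the \partition instance so that the minimum possible imbalance in a NO case (which is at least $1$ by integrality) translates into an energy penalty exceeding $\lambda$ times the balanced energy. Since a $\lambda$-approximation algorithm run on such an instance would have to return a value within $\lambda$ of the optimum, it could distinguish YES from NO instances in polynomial time, contradicting $P \ne NP$.

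\textbf{The main obstacle} I anticipate is handling the reliability constraint and the replication/re-execution options cleanly: an adversarial solver might try to replicate a task to shift load, or run some tasks below $\fr$ by replicating them, thereby evading the simple "one processor overloaded" argument. I would need to argue that in the NO case, \emph{every} feasible schedule — including those using replication or re-execution — incurs energy at least $\lambda E_{\textrm{yes}}$, which requires bounding the energy cost of replication from below (replicating a task at least doubles one of its contributions, as established in the energy model $E_i = E_i(f^{(1)}) + E_i(f^{(2)})$). Pinning down $D$ and $\fr$ so that replication never rescues a NO instance cheaply, while keeping YES instances feasible and cheap, is the crux; the quadratic separation between balanced and unbalanced loads is what ultimately delivers the unbounded gap for arbitrary~$\lambda$.
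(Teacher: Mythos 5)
You have the right reduction source (\partition with $p=2$ and $w_i=a_i$), and you correctly sense that the cleanest route is to make NO instances infeasible; but neither of the two mechanisms you sketch actually works as stated, and the one that does work is never pinned down. The gap-amplification route fails outright: scaling all weights by a factor $K$ scales the balanced and unbalanced optimal energies identically (the minimum imbalance $1$ becomes $K$, but $B$ becomes $KB$), so for a NO instance with loads $B+\delta$ and $B-\delta$ the energy ratio against the balanced value is roughly $1+\Theta(\delta^2/B^2)$, which is \emph{preserved} by uniform scaling and in fact tends to $1$ as instances grow. Hence no choice of $K$ yields a multiplicative gap of $\lambda$ uniformly over all \partition instances, and a $\lambda$-approximation could not distinguish YES from NO through energy values. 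The ``quadratic blowup'' is a relative effect; scaling is gap-preserving, not gap-amplifying.

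The missing idea is to cap the maximum speed. The paper sets $\fmin=\fr=\fmax=S/2$ with $D=1$, so the total work capacity of the two processors within the deadline is exactly $S$: any feasible schedule must execute each task exactly once (replication is impossible by capacity, which disposes of your worry about adversarial replication without needing any lower bound on its energy cost) and must balance the loads exactly. A feasible schedule therefore exists if and only if the \partition instance is a YES instance. The conceptual point you miss is that the hardness is then purely a \emph{feasibility} hardness, not an energy-gap hardness: a $\lambda$-approximation algorithm must return a feasible schedule whenever one exists, so running it decides \partition in polynomial time, for every $\lambda>1$ --- the ratio $\lambda$ never enters the argument. Your construction, with $\fmax$ left unconstrained, keeps NO instances feasible (the overloaded processor simply runs slightly faster than $\fr$, which moreover improves reliability), which is precisely what forced you toward the unworkable gap argument.
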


\begin{proof}
Let us assume that there is a $\lambda$-approxi\-mation algorithm for
\indep. 
We consider an instance $\II_1$ of \partition: given $n$ strictly positive 
integers $a_1, \ldots, a_n$, does there exist a subset $I$ of $\{1,\ldots, n\}$ 
such that $\sum_{i\in I}a_i= \sum_{i \notin I} a_i$? Let
$S=\sum_{i=1}^n a_i$.

We build the following instance~$\II_2$ of our problem. We have $n$ 
independent tasks~$T_i$ to be mapped on $p=2$ processors, and: 
\begin{compactitem}
\item task~$T_i$ has a weight $w_i=a_i$;
\item $\fmin= \fr = \fmax = S/2$; 
\item $D = 1$.
\end{compactitem}

\medskip
We use the $\lambda$-approxi\-mation algorithm to solve~$\II_2$, and the
solution of the algorithm $E_{algo}$ is such that $E_{algo}\leq
\lambda E_{opt}$, where $E_{opt}$ is the optimal solution. 
We consider the two following cases.\\ 
(i) If the $\lambda$-approxi\-mation algorithm returns a solution, then
necessary all tasks are executed exactly once at speed~\fmax, since
$\sum_{i=1}^n w_i/\fmax = 2$ and there are two processors. Moreover,
because of the makespan constraint, the load on each processor is
equal. Let $I$ be the indices of the tasks executed on the first
processor. We have $\sum_{i\in I}a_i= \sum_{i \notin I} a_i$, and
therefore $I$~is also a solution to~$\II_1$. \\
(ii)~If  the $\lambda$-approxi\-mation algorithm does not return a
solution, then there is no solution to~$\II_1$. Otherwise, if $I$ is a
solution to~$\II_1$, there is a solution to~$\II_2$ such that tasks
of~$I$ are executed on the first processor, and the other tasks are
executed on the second processor. Since $E_{algo}\leq \lambda E_{opt}$,
the approximation algorithm should have returned a valid solution.  

Therefore, the result of the algorithm for~$\II_2$ allows us to
conclude in polynomial time whether there is a solution to the
instance~$\II_1$ of \partition or not. Since \partition is
NP-complete \cite{GareyJohnson}, the inapproximability result is true
unless P=NP. 
\end{proof}

\subsection{Characterization}
\label{char.indep}

As discussed in Section~\ref{sec.intro}, the problem of scheduling
independent tasks is usually close to a problem of load balancing, and
can be efficiently approximated for various mono-criterion versions of
the problem (minimizing the makespan or the energy, for instance).  
However, the tri-criteria problem turns out to be much harder, and
cannot be approximated, as seen in Section~\ref{sec.inapprox}, even
when reliability is not a constraint. 

Adding reliability further complicates the problem, since we no longer
have the property that on each processor, there is a constant
execution speed for the tasks executed on this processor. Indeed, some
processors may process both tasks that are not replicated (or
re-executed), hence at speed~\fr, and replicated tasks at a slower
speed.  Similarly to Section~\ref{lin.fptas}, we use the term {\em
  replication} for either re-execution or replication; if a task is
replicated, it means it is executed two times, and it appears two
times in the load of processors, be it the same processor or two
distinct processors.

Furthermore, contrary to the \chain problem, we do not always have the
same execution speed for both executions of a task, as in
Lemma~\ref{lemma.speed.chain}:  
\begin{proposition}
\label{prop.indep}
  In an optimal solution of \indep, if a task~$T_i$ is executed twice:
\begin{compactitem}
    \item if both executions are on the same processor, then both are executed 
at the same speed, lower than $\frac{1}{\sqrt{2}}\fr$;
\item however, when the two executions of this task are on distinct
  processors, then they are not necessarily executed at the same
  speed.  Furthermore, one of the two speeds can be greater than
  $\frac{1}{\sqrt{2}}\fr$.
\end{compactitem}
Moreover, we have $w_i<\frac{1}{\sqrt{2}}D\fr$. 
\end{proposition}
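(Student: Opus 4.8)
The plan is to treat the three claims separately, relying throughout on the two tools already used in Proposition~\ref{prop_WC_fr}: an \emph{exchange argument} that replaces a double execution by a single execution at speed~\fr whenever this is feasible and at least as good, and the convexity of $f\mapsto f^2$ together with the reliability/convexity argument of Lemma~\ref{lemma.speed.chain}. Throughout I fix an optimal schedule and a task $T_i$ executed twice at speeds $f^{(1)}$ and $f^{(2)}$; since each copy must fit within the deadline on its processor, both satisfy $f^{(1)},f^{(2)}\geq w_i/D$.

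For the first claim (both copies on the same processor), the equal-speed property is exactly the re-execution case of Lemma~\ref{lemma.speed.chain}, which holds for an arbitrary task graph by convexity of the energy and reliability functions. Writing $f$ for this common speed, the task occupies its processor for time $2w_i/f$ and consumes energy $2w_if^2$. If $f\geq\frac{1}{\sqrt{2}}\fr$, then executing $T_i$ only once at speed~\fr meets the reliability constraint, consumes energy $w_i\fr^2\leq 2w_if^2$, and takes time $w_i/\fr<2w_i/f$; this frees processor time without increasing the energy, so the re-execution is never strictly necessary and a dominating schedule exists, contradicting the assumption that an optimal solution re-executes $T_i$ at such a speed. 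Hence $f<\frac{1}{\sqrt{2}}\fr$.

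The second claim is an existence statement, and producing a witness instance is the main obstacle. The mechanism I would exploit is that on two \emph{distinct} processors the two copies face \emph{different} amounts of idle time, so the energy-optimal symmetric speed can be infeasible on a heavily loaded processor, forcing its copy to run fast while the copy on a lightly loaded processor runs slow. Concretely, I would build an instance on $p=2$ processors with one large task pinned near speed~\fr on the first processor (leaving only a short free window there) and a task~$T_i$ whose replication is globally cheapest when one copy is squeezed into that short window --- hence at a speed exceeding $\frac{1}{\sqrt{2}}\fr$ --- while the other copy is stretched over the ample room on the second processor. The delicate part is tuning the weights and~$D$ so that this asymmetric replication is genuinely optimal, i.e. beats (i) a single execution at~\fr, (ii) re-execution, and (iii) any balanced replication; I would verify this by comparing the closed-form energies of these finitely many alternatives.

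For the weight bound I handle the two cases and show the replication case is the binding one. If $T_i$ is re-executed, the first claim gives $f<\frac{1}{\sqrt{2}}\fr$, and $2w_i/f\leq D$ yields $w_i<\frac{1}{2\sqrt{2}}D\fr<\frac{1}{\sqrt{2}}D\fr$. If $T_i$ is replicated, I first argue both speeds lie below~\fr: were the larger speed $\geq\fr$, replacing the pair by a single execution at that speed would meet reliability, fit the same slot, free the other processor, and strictly lower the energy, contradicting optimality. With both speeds below~\fr, a single execution at~\fr fits the deadline, so optimality forces the replication energy not to exceed it, i.e. $(f^{(1)})^2+(f^{(2)})^2\leq\fr^2$. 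Combining with $f^{(1)},f^{(2)}\geq w_i/D$ gives $2(w_i/D)^2\leq\fr^2$, hence $w_i\leq\frac{1}{\sqrt{2}}D\fr$; equality would make a single execution at~\fr strictly preferable (same energy, strictly less time, one fewer processor occupied), so the inequality is strict. This closes the bound in all cases.
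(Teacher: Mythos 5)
The genuine gap is in the second bullet, which is an existence claim and therefore needs an explicit witness. Your proof stops at a construction plan: you identify exactly the right mechanism (a large task pinned near speed~\fr on one processor leaving only a short free window, with ample room on the other processor, so that the energy-optimal placement squeezes one copy of $T_i$ into the short window at high speed), but you explicitly defer ``tuning the weights and~$D$'' and the verification that the asymmetric replication beats the alternatives. That verification is precisely the content of the paper's proof of this bullet: it exhibits $p=2$, $\fr=\fmax$, $D=6.4/\fmax$, $w_1=5$, $w_2=3$, $w_3=1$, observes that the deadline forces $T_1$ and $T_2$ onto distinct processors (with residual windows of lengths $1.4/\fmax$ and $3.4/\fmax$, and no room to re-execute either), and compares the three candidate treatments of $T_3$: energy $\fmax^2$ for one execution at~\fr, energy $2\left(\frac{2}{3.4}\right)^2\fmax^2\approx 0.69\,\fmax^2$ for re-execution in the long window, and energy $\left(\frac{1}{1.4}\right)^2\fmax^2+\left(\frac{1}{3.4}\right)^2\fmax^2\approx 0.60\,\fmax^2$ for the split, which wins, with $\frac{1}{1.4}>\frac{1}{\sqrt{2}}$. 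Until you carry out such a computation, the ``one speed can exceed $\frac{1}{\sqrt{2}}\fr$'' claim is unproven, and this bullet is the heart of the proposition (it is what breaks the equal-speed structure of Lemma~\ref{lemma.speed.chain} and dooms naive load balancing).

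The other two parts are essentially correct. The first claim matches the paper, which simply invokes the re-execution argument of \cite{rr7757}; one nit is that your time comparison $w_i/\fr<2w_i/f$ presumes $f<2\fr$, which can fail when $\fmax\geq 2\fr$ --- cover that regime by substituting a single execution at speed $\max(\fr,f/2)$, which still fits the slot and lowers the energy. Also, at $f=\frac{1}{\sqrt{2}}\fr$ the energies tie exactly, so the strict inequality needs a word about tie-breaking (a blemish the paper shares). For the weight bound your route genuinely differs from the paper's: the paper argues in one line that at least one copy must run strictly slower than $\frac{1}{\sqrt{2}}\fr$ (else a single execution at~\fr dominates), uniformly over re-execution and replication, and divides by~$D$; you instead split into cases and, in the replication case, derive $(f^{(1)})^2+(f^{(2)})^2\leq\fr^2$ from optimality and combine it with $f^{(1)},f^{(2)}\geq w_i/D$. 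Your version is slightly longer but makes the exchange step explicit and correctly handles unequal speeds, which is a reasonable trade; both yield the bound.
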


\begin{proof}
  We start by proving the properties on the speeds.  
  When both executions occur on the same processor, this property was
  shown by \cite{rr7757}: a single execution at speed~\fr leads to a
  better energy consumption (and a lower execution time). 

  In the case of distinct processors, we give an example in which
  the optimal solution uses different speeds for a replicated task,
  with one speed greater than $\frac{1}{\sqrt{2}}\fr$. Note that one
  of the speeds is necessary lower than $\frac{1}{\sqrt{2}}\fr$,
  otherwise a solution with only one execution of this task at
  speed~\fr would be better, similarly to the case with re-execution. 

  Consider a problem instance with two processors, $\fr=\fmax$,
  $D=\frac{6.4}{\fmax}$, and three tasks such that $w_1 = 5$, $w_2=3$,
  and $w_3=1$.  Because of the time constraints, $T_1$ and $T_2$ are
  necessarily executed on two distinct processors, and neither of
  them can be re-executed on its processor.  The problem consists in
  scheduling task~$T_3$ to minimize the energy consumption. There are
  three possibilities: 
\begin{compactitem}
\item $T_3$ is executed only once on any of the processors, at
  speed~$\fr=\fmax$; 
\item $T_3$ is executed twice on the same processor; it is executed on
  the same processor than~$T_2$, hence having an execution time of
  $D-\frac{w_2}{\fmax} = \frac{3.4}{\fmax}$, and therefore both
  executions are done at a speed $\frac{2}{3.4}\fmax$; 
    \item $T_3$ is executed once on the same processor than~$T_1$ at
      a speed $\frac{1}{1.4}\fmax$, and once on the other processor at
      a speed $\frac{1}{3.4}\fmax$. 
\end{compactitem} 
It is easy to see that the minimum energy consumption is obtained with
the last solution, and that $\frac{1}{1.4}\fmax >
\frac{1}{\sqrt{2}}\fr$, hence the result.  

Finally, note that since at least one of the executions of the task
should be at a speed lower than $\frac{1}{\sqrt{2}}\fr$, and since the
deadline is~$D$, in order to match the deadline, the weight of the
replicated task has to be strictly lower than
$\frac{1}{\sqrt{2}}D\fr$. 
\end{proof}

Because of this proposition, usual load balancing algorithms are
likely to fail, since processors handling only non-replicated tasks
should have a much higher load, and speeds of replicated tasks may be
very different from one processor to another in the optimal solution. 







\medskip

We now derive lower bounds on the energy consumption, that will be
useful to design an approximation algorithm in the next section. 


\begin{proposition}[Lower bound without reliability]
 \label{bound.lp}
 The optimal solution of \indep cannot have an energy lower than
 $\frac{S^3}{(pD)^2}$.
\end{proposition}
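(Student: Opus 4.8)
The plan is to strip the problem down to a pure energy/makespan question and then invoke convexity twice. Since the proposition is stated as a bound \emph{without reliability}, I would first argue that any feasible schedule can be turned into one in which each task is executed exactly once, without increasing the energy: given an optimal schedule, simply discard all but one execution of every replicated or re-executed task. Removing executions only removes work, so the deadline~$D$ is still met, and since energy is a sum of nonnegative per-execution terms $w_i f^2$, the energy of the reduced single-execution schedule~$E'$ satisfies $E' \le E_{opt}$. It therefore suffices to lower-bound the energy of a schedule in which each of the $n$ tasks is executed once at a single speed, on one of the $p$ processors, meeting the deadline; any such bound is a lower bound on $E_{opt}$.

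Next I would analyze one processor in isolation. Let $W_q = \sum_{i \in load(q)} w_i$ be the total weight assigned to processor~$q$, so that $\sum_{q=1}^p W_q = S$. On this processor the deadline constraint reads $\sum_{i \in load(q)} \frac{w_i}{f_i} \le D$ and the energy is $\sum_{i \in load(q)} w_i f_i^2$. By convexity of the energy as a function of the speeds (this is exactly the single-processor result established as Proposition~2 and Theorem~1 of~\cite{aupy12ccpe}), the minimum is attained when all tasks on the processor run at the common speed $W_q/D$, giving a per-processor energy of at least $W_q\left(\frac{W_q}{D}\right)^2 = \frac{W_q^3}{D^2}$. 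Summing over the $p$ processors yields $E' \ge \frac{1}{D^2}\sum_{q=1}^p W_q^3$.

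Finally I would close the argument with a second convexity step: since $x \mapsto x^3$ is convex and $\sum_q W_q = S$, Jensen's inequality (equivalently the power-mean inequality) gives $\sum_{q=1}^p W_q^3 \ge p\left(\frac{S}{p}\right)^3 = \frac{S^3}{p^2}$, the minimum being reached for a perfectly balanced load $W_q = S/p$. Combining the two bounds produces $E_{opt} \ge E' \ge \frac{S^3}{p^2 D^2} = \frac{S^3}{(pD)^2}$, as claimed. I expect the only delicate point to be the per-processor minimization: one must justify that a single common speed is optimal on each processor rather than heterogeneous speeds, which is precisely the convexity fact I would import from~\cite{aupy12ccpe}. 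The ``drop the replicas'' reduction and the cross-processor Jensen step are routine; the whole bound is deliberately slack because it discards both the reliability constraint and the integrality of task assignment.
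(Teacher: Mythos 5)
Your proof is correct and takes essentially the same route as the paper, whose own argument is a one-line assertion that the minimum is attained with perfectly balanced loads $W_q = S/p$ and a constant speed $\frac{S}{pD}$. Your write-up simply makes explicit the steps the paper leaves implicit --- discarding second executions, the per-processor reduction to a common speed by convexity, and the Jensen step across processors --- all of which are sound.
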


\begin{proof}
Let us consider the problem of minimizing the energy consumption, with
a deadline constraint~$D$, but without accounting for the constraint
on reliability. A lower bound is obtained if the load on each
processor is exactly equal to~$\frac{S}{p}$, and the speed of each
processor is constant and equal to~$\frac{S}{pD}$. The corresponding
energy consumption is $S \times \left(\frac{S}{pD}\right)^2$, hence
the bound. 
\end{proof}

However, if the speed $\frac{S}{pD}$ is small compared to~\fr,
the bound is very optimistic since reliability constraints are not
matched at all. Indeed, replication must be used in such a case. 
We investigate bounds that account for replication in the following,
using the optimal solution of the \chain problem. 



\begin{proposition}[Lower bound using linear chains]
    \label{bound.chain.improved}
    For the \indep problem, the optimal solution cannot have an energy
    lower than the optimal solution to the \chain problem on a single
    processor with a deadline~$pD$, where the weight of the
    re-executed tasks is lower than $\frac{1}{\sqrt{2}}D\fr$.
\end{proposition}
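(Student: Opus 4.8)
The plan is to prove this lower bound by a \emph{flattening} argument: I would take an optimal schedule for \indep on $p$ processors with deadline~$D$, and transform it into a \emph{feasible} schedule for \chain on a single processor with deadline~$pD$ (hence using re-execution, cf.\ Lemma~\ref{chain_reporreex}) that has exactly the same energy consumption. Since the energy of any feasible chain schedule is at least the optimal chain energy, this immediately yields $E_{opt}^{\mathrm{ind}} \geq E_{opt}^{\mathrm{chain}}$, which is the claim.

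Concretely, I would fix an optimal \indep solution and concatenate the schedules of the $p$ processors end to end onto a single processor, keeping the speed of every execution unchanged. A task executed once keeps its single execution; a task executed twice---whether re-executed on one processor or replicated on two distinct processors---now has both of its executions placed on the single processor, i.e.\ it becomes re-executed (possibly with two different speeds, which is permitted in the general formulation). The feasibility of the resulting chain schedule is then checked on three points. The makespan is the sum of the per-processor execution times, each of which is at most~$D$ by the original deadline constraint, so the total time is at most~$pD$; the chain precedence can be met by ordering the tasks sequentially, since on a single processor the total time, the energy and the per-task reliability are all independent of the ordering. The reliability constraint $R_i \geq R_i(\fr)$ holds for every task because each task keeps exactly the same speed (or the same pair of speeds), so its reliability is unchanged. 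Finally, by Proposition~\ref{prop.indep} every task executed twice in the \indep solution satisfies $w_i < \frac{1}{\sqrt{2}}D\fr$, which is precisely the weight restriction imposed on the re-executed tasks of the target chain problem---this is exactly why that hypothesis appears in the statement.

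The energy is preserved because it is additive over the individual executions and does not depend on which processor performs an execution (each execution of $T_i$ at speed~$f$ contributes $w_i f^2$). Hence the flattened schedule has energy equal to $E_{opt}^{\mathrm{ind}}$, and being a feasible solution of the constrained chain problem, its energy is bounded below by the chain optimum; combining gives $E_{opt}^{\mathrm{ind}} \geq E_{opt}^{\mathrm{chain}}$.

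The step I expect to be the main obstacle is the correct treatment of a task replicated on two \emph{distinct} processors at two \emph{different} speeds, which is genuinely possible in an optimal \indep solution by Proposition~\ref{prop.indep}. Under flattening this becomes a re-execution with unequal speeds, which is not the canonical form assumed in Lemma~\ref{lemma.speed.chain} or Corollary~\ref{cor.energy.chain}. The resolution is that I only need a \emph{feasible} chain schedule to lower-bound the chain optimum, not one in canonical form, so unequal speeds cause no difficulty (one could even invoke the convexity of Lemma~\ref{lemma.speed.chain} to equalize the two speeds and lower the energy further, but this is unnecessary). The one point that genuinely requires the weight bound of Proposition~\ref{prop.indep} is to guarantee that the re-executed tasks of the flattened schedule respect the constraint $w_i < \frac{1}{\sqrt{2}}D\fr$ built into the target chain problem, without which the chain optimum could be made smaller and would fail to be a valid lower bound.
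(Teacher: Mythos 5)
Your proposal is correct and follows essentially the same route as the paper's proof: transform the optimal \indep schedule into a feasible single-processor chain schedule with deadline~$pD$ by keeping the same number of executions and the same speed(s) per task, and invoke Proposition~\ref{prop.indep} for the weight bound on re-executed tasks. Your more careful handling of replicas with unequal speeds (noting that only feasibility, not canonical form, is needed to lower-bound the chain optimum) is a worthwhile elaboration of what the paper states tersely as ``the same speed(s) for each task,'' but it is the same argument.
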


\begin{proof}
  We can transform any solution to the \indep problem into a solution
  to the \chain problem with deadline~$pD$ and a single
  processor. Tasks are arbitrarily ordered as a linear chain, and the
  solution uses the same number of executions and the same speed(s)
  for each task. It is easy to see that the \indep problem is more
  constrained, since the deadline on each processor must be enforced. 
  The constraint on the weights of the re-executed tasks comes from
  Proposition~\ref{prop.indep}. Therefore, the solution to the \chain
  problem is a lower bound for \indep. 
\end{proof}

The optimal solution may however be far from this bound, since we do
not know if the tasks that are re-executed on a chain with a long
deadline~$pD$ can be executed at the same speed when the deadline
is~$D$. The constraint on the weight of the re-executed tasks allows
us to improve slightly the bound, and this lower bound is the basis of
the approximation algorithm that we design for \indep. 


\subsection{Approximation algorithm for \indep}
\label{algo.indep}


We have seen in Section~\ref{sec.inapprox} that there exists no
constant factor approximation algorithm for \indep, unless P=NP, even
without accounting for the reliability constraint. This is due to the
constraint on the makespan and the maximum speed~\fmax. Therefore, in
order to provide a constant factor approximation algorithm, we relax
the constraint on the makespan and propose an
$(\alpha,\beta)$-approxi\-mation algorithm. The solution~$E_{algo}$ is
such that $E_{algo} \leq \alpha \times E_{opt}$, where $E_{opt}$ is
the optimal solution with the deadline constraint~$D$, and the
makespan of the algorithm $M_{algo}$ is such that $M_{algo}\leq \beta
\times D$. 

The result of Section~\ref{sec.inapprox} means that for all $\alpha
>1$, there is no $(\alpha,1)$-approxi\-mation algorithm for \indep,
unless $P=NP$.  Therefore, 
we present 
an algorithm that realizes a 
$(1+\frac{1}{\beta^2},\beta)$-approximation, where the minimum relaxation on
the deadline is smaller than~$2$. It is of course
possible to run the algorithm with larger values of~$\beta$, leading
to a better guarantee on the energy consumption. 


\medskip
{\bf Sketch of the algorithm.}  In the first step of the algorithm, we
schedule each task with a big weight alone on one
processor, with no replication. A task~$T_i$ is considered as {\em big}
if $w_i \geq \max(\frac{S}{p},D\fr)$. This step is done
in polynomial time: we sort the tasks by non-increasing
weights, and then we check whether the current task is such that $w_i \geq
\max(\frac{S}{p},D\fr)$. If it is the case, we schedule the task alone
on a processor and we let $S=S-w_i$ and $p=p-1$. The procedure ends
when the current task is small enough, i.e., all remaining tasks
are such that $w_i < \max(\frac{S}{p},D\fr)$, with the updated values
of $S$ and~$p$. 
%

\newcounter{nbProcs}
\newcounter{fra}
\newlength{\wiwi}
\newcounter{vis}
\newcounter{ssecbu}

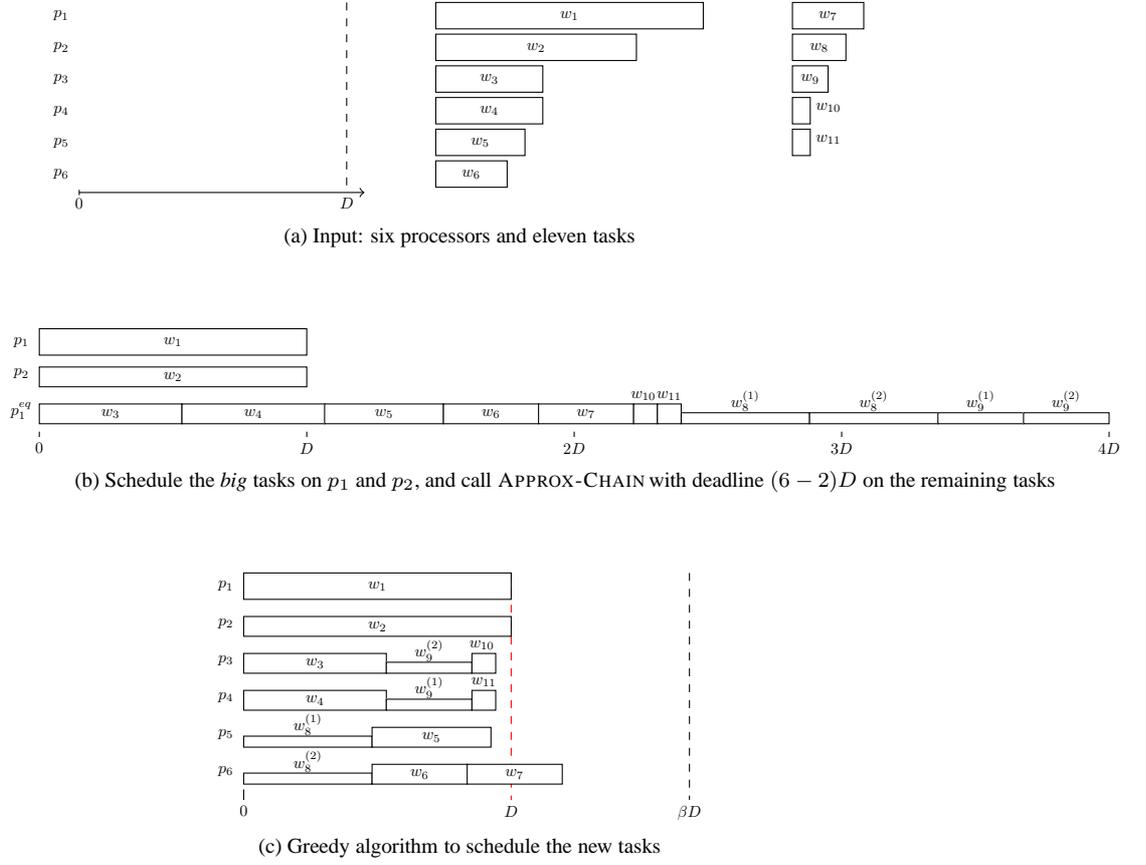
\begin{figure*}
\centering
\subfloat[Input: six processors and eleven tasks]
{\begin{tikzpicture}
\begin{scope}[xscale=0.67]

\FPeval{\procspace}{.2}
\FPeval{\cury}{0}
\FPeval{\curx}{0}

\coordinate (OFF) at (0,0);
\begin{scope}[scale=.35, every node/.style={scale=0.6},shift=(OFF)]

\inSched
\addProc
\draw[->] (0,0-36*\procspace) -- (16,0-36*\procspace);
\draw[dashed] (15,0) -- (15,0-36*\procspace) ;
\draw[draw=none] (15,0-36*\procspace) node[below] {\ensuremath{D}};
\draw (0,0-35.5*\procspace) -- (0,0-36.5*\procspace) ;
\draw[draw=none] (0,0-36*\procspace) node[below] {\ensuremath{0}};

\addProc
\addProc
\addProc
\addProc
\addProc

\end{scope}

 \coordinate (SH) at (7,0);
 
 \begin{scope}[scale=.35, every node/.style={scale=0.6},shift=(SH)]

\addLisJob{15}{1}{1}
\addLisJob{11.25}{1}{2}
\addLisJob{6}{1}{3}
\addLisJob{6}{1}{4}
\addLisJob{5}{1}{5}
\addLisJob{4}{1}{6}

\end{scope}
 \coordinate (SH) at (14,0);

\begin{scope}[scale=.35, every node/.style={scale=0.6},shift=(SH)]

\addLisJob{4}{1}{7}
\addLisJob{3}{1}{8}
\addLisJob{2}{1}{9}
\addLisJob[below right]{1}{1}{{10}}
\addLisJob[below right]{1}{1}{{11}}

\end{scope}
\end{scope}
\end{tikzpicture}

}\\[1cm]
\subfloat[Schedule the {\em big} tasks on $p_1$ and $p_2$, and call
\approxchain\xspace with deadline $(6-2)D$ on the remaining tasks]
{
\begin{tikzpicture}
\begin{scope}[xscale=0.67]

\FPeval{\procspace}{.2}
\FPeval{\cury}{0}
\FPeval{\curx}{0}

\coordinate (OFF) at (0,0);
\begin{scope}[scale=.35, every node/.style={scale=0.6},shift=(OFF)]

\inSched
\addProc

\addJobSpeed[anchor=center]{15}{1}{1}

\addProc
\FPdiv{\speed}{3}{4}
\addJobSpeed[anchor=center]{11.25}{\speed}{2}

\addProcChain

\addJobSpeed{6}{\speed}{3}
\addJobSpeed{6}{\speed}{4}
\addJobSpeed{5}{\speed}{5}
\addJobSpeed{4}{\speed}{6}
\addJobSpeed{4}{\speed}{7}
\addJobSpeed[above=0.1]{1}{\speed}{{10}~}
\addJobSpeed[above=0.1]{1}{\speed}{{11}}
\FPdiv{\speed}{5}{12}
\addJobSpeed[above=0.2pt]{3}{\speed}{\ensuremath{8^{(1)}}}
\addJobSpeed[above=0.2pt]{3}{\speed}{\ensuremath{8^{(2)}}}
\addJobSpeed[above=0.2pt]{2}{\speed}{\ensuremath{9^{(1)}}}
\addJobSpeed[above=0.2pt]{2}{\speed}{\ensuremath{9^{(2)}}}
\FPsub{\cury}{\cury}{1}

\foreach \y in {1.5}
	{
	\draw[draw=none] (0,\cury-\y*\procspace-\procspace) node[below] {\ensuremath{0}};
	\draw (0,\cury-\y*\procspace) -- (0,\cury-\y*\procspace-\procspace) ;
	\draw[draw=none] (15,\cury-\y*\procspace-\procspace) node[below] {\ensuremath{D}};
	\draw (15,\cury-\y*\procspace) -- (15,\cury-\y*\procspace-\procspace) ;
	\foreach \x in {2,...,4}
		{
		\draw[draw=none] (15*\x,\cury-\y*\procspace-\procspace) node[below] {\ensuremath{\x D}};
		\draw (15*\x,\cury-\y*\procspace) -- (15*\x,\cury-\y*\procspace-\procspace) ;
		};
	}

\end{scope}

\end{scope}
\end{tikzpicture}

}\\[1cm]
\subfloat[Greedy algorithm to schedule the new tasks]
{
\begin{tikzpicture}
\begin{scope}[xscale=0.67]

\FPeval{\procspace}{.4}
\FPeval{\cury}{0}
\FPeval{\curx}{0}

\coordinate (OFF) at (0,0);
\begin{scope}[scale=.35, every node/.style={scale=0.6},shift=(OFF)]

\inSched
\addProc

\draw (0,0-20.5*\procspace) -- (0,0-21.5*\procspace) ;
\draw[draw=none] (0,0-21.5*\procspace) node[below] {\ensuremath{0}};
\draw[dashed,red] (15,0) -- (15,0-21.5*\procspace) ;
\draw[draw=none] (15,0-21.5*\procspace) node[below] {\ensuremath{D}};
\draw[dashed] (25,0) -- (25,0-21.5*\procspace) ;
\draw[draw=none] (25,0-21.5*\procspace) node[below] {\ensuremath{\beta D}};

\addJobSpeed{15}{1}{1}
\addProc
\FPdiv{\speed}{3}{4}
\addJobSpeed{11.25}{\speed}{2}

\addProc
\FPdiv{\speed}{3}{4}
\addJobSpeed{6}{\speed}{3}
\FPdiv{\speed}{5}{12}
\addJobSpeed[above=0.2pt]{2}{\speed}{\ensuremath{9^{(2)}}}
\FPdiv{\speed}{3}{4}
\addJobSpeed[above=0.1]{1}{\speed}{{10}~}

\addProc
\FPdiv{\speed}{3}{4}
\addJobSpeed{6}{\speed}{4}
\FPdiv{\speed}{5}{12}
\addJobSpeed[above=0.2pt]{2}{\speed}{\ensuremath{9^{(1)}}}
\FPdiv{\speed}{3}{4}
\addJobSpeed[above=0.1]{1}{\speed}{{11}}

\addProc
\FPdiv{\speed}{5}{12}
\addJobSpeed[above=0.2pt]{3}{\speed}{\ensuremath{8^{(1)}}}
\FPdiv{\speed}{3}{4}
\addJobSpeed{5}{\speed}{5}

\addProc
\FPdiv{\speed}{5}{12}
\addJobSpeed[above=0.2pt]{3}{\speed}{\ensuremath{8^{(2)}}}

\FPdiv{\speed}{3}{4}
\addJobSpeed{4}{\speed}{6}
\addJobSpeed{4}{\speed}{7}

\end{scope}

\end{scope}
\end{tikzpicture}
}
\caption{$\left(1+\frac{1}{\beta^2},\beta \right)$-approximation algorithm for independent tasks\label{fig.indep}}
\end{figure*}

\begin{compactitem}

\item If 
  ${S} > pD \fr $, i.e., 
  the load is {\em large enough}, we do not use replication, but we
  schedule the tasks at
  speed~$\frac{S}{pD}$, 
  using a simple scheduling heuristic, \dff~\cite{Graham69}. Tasks
  are sorted by non increasing weights, and at each time step, we
  schedule the current task on the least loaded processor. Thanks to
  the lower bound of Proposition~\ref{bound.lp}, the energy
  consumption is not greater than the optimal energy consumption, and
  we determine $\beta$ such that the deadline is
  enforced.  

\item If $S\leq p D \fr$, 
  the previous bound is not good enough, and therefore we use the
  FPTAS on a linear chain of tasks with deadline~$pD$ for \chain (see
  Theorem~\ref{approx.chain}).  The FPTAS is called with
  \begin{equation}
\label{def.epsilon}
\varepsilon=\min\left(\frac{2w_{min}}{3S}\left(\frac{\fmin}{\fr}\right)^2,\;
    \frac{1}{3\beta^2}\right), 
\end{equation}
  where $w_{min} = \min_{1\leq i \leq n} w_i$. Note that it is
  slightly modified so that only tasks of weight
  $w<\frac{1}{\sqrt{2}}D\fr$ can be replicated, and that we enforce a
  minimum speed~$\fmin$.  The FPTAS therefore determines which tasks
  should be executed twice, and it fixes all execution speeds.
   
  We then use \dff in order to map the tasks onto the $p$ processors,
  at the speeds determined earlier. The new set of tasks includes both
  executions in case of replication, and tasks are sorted by non
  increasing execution times (since all speeds are fixed). At each
  time step, we schedule the current task on the least loaded
  processor. If some tasks cannot fit in one processor within the
  deadline~$\beta D$, we re-execute them at speed~$\frac{w_i}{\beta
    D}$ on two processors. Thanks to the lower bound of
  Proposition~\ref{bound.chain.improved}, we can bound the energy
  consumption in this case.
\end{compactitem}

\medskip
\noindent We illustrate the algorithm on an example in
Figure~\ref{fig.indep}, where eleven tasks must be mapped on six
processors. For each task, we represent its execution speed as its
height, and its execution time as its width. There are two {\em big}
tasks, of weights $w_1$ and~$w_2$, that are each mapped on a distinct
processor. Then, we have $p=4$ and we call \approxchain\xspace with
deadline~$4D$; tasks $T_8$ and $T_9$ are replicated. Finally, \dff
greedily maps all instances of the tasks, slightly exceeding the
original bound~$D$, but all tasks fit within the extended deadline.

\medskip
This algorithm leads to the following theorem: 

\begin{theorem}
    \label{thm.indep}
For the problem \indep, there are
$\left(1+\frac{1}{\beta^2},\beta \right)$-approxi\-mation
algorithms, for all $\beta \geq 2-\Theta(\frac{1}{p})$, that run in
polynomial time.
\end{theorem}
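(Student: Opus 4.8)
The plan is to bound the energy ratio and the makespan ratio separately, mirroring the three phases of the algorithm. I would first dispose of the \emph{big} tasks, those with $w_i \ge \max(S/p, D\fr)$. Such a task cannot be replicated in any schedule: by Proposition~\ref{prop.indep} a replicated task has weight below $\frac{1}{\sqrt2}D\fr < D\fr \le w_i$. Hence in every feasible solution it is executed once, at speed at least $w_i/D \ge \fr$, so its minimum feasible energy is $w_i^3/D^2$, attained at speed $w_i/D$ on a dedicated processor exactly as the algorithm does. I would argue that this quantity lower-bounds the task's contribution to any schedule, so that composing it with the residual lower bounds of Propositions~\ref{bound.lp} and~\ref{bound.chain.improved} gives a valid lower bound on $E_{opt}$; it then suffices to establish the two ratios on the residual instance with $p'$ processors and total weight $S'$, where every remaining task satisfies $w_i < \max(S'/p', D\fr)$.

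On the residual instance I follow the two cases. In Case~1, $S' > p'D\fr$, so $\max(S'/p',D\fr)=S'/p'$, every task run at the common speed $S'/(p'D)\ge\fr$ has execution time strictly below $D$, and reliability holds without replication. The energy is then $S'^3/(p'D)^2$, exactly the bound of Proposition~\ref{bound.lp}, so the energy ratio is $1\le 1+\frac{1}{\beta^2}$. For the makespan I would use the classical list-scheduling estimate for \dff: if the job finishing last starts at time $s$ on the least-loaded processor, then $p's\le T$ where $T\le p'D$ is the total work, so the makespan is at most $T/p'+t_{\max}(1-1/p')\le D+t_{\max}(1-1/p')<(2-1/p')D$, since $t_{\max}<D$.

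In Case~2, $S'\le p'D\fr$, and I would invoke Proposition~\ref{bound.chain.improved}: $E_{opt}$ on the residual instance is at least the \chain optimum on one processor with deadline $p'D$ under the restriction $w_i<\frac{1}{\sqrt2}D\fr$. The call to \approxchain{} returns a replication pattern and speeds of energy within $(1+\varepsilon)$ of that chain optimum, hence within $(1+\varepsilon)E_{opt}$, while the total execution time over all copies is at most $p'D$. The same list-scheduling estimate then controls the makespan, except for copies too long to fit within $\beta D$; these are re-executed at speed $w_i/(\beta D)$, so each such copy fills a processor of length exactly $\beta D$ and the makespan stays at most $\beta D$. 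I would then bound the extra energy of the forcibly re-executed tasks, each costing $\frac{2w_i^3}{\beta^2 D^2}$, and combine it with the $(1+\varepsilon)$ factor and the choice $\varepsilon=\min\!\big(\frac{2w_{min}}{3S}(\fmin/\fr)^2,\frac{1}{3\beta^2}\big)$, so that the three error sources each stay below $\frac{1}{3\beta^2}$ and sum to the claimed $1+\frac{1}{\beta^2}$.

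The main obstacle is precisely this energy accounting in Case~2: showing that the copies \dff fails to pack within $\beta D$ are light enough that re-executing them at speed $w_i/(\beta D)$ costs at most $\frac{1}{\beta^2}E_{opt}$, compatibly with the FPTAS slack, where the first term of $\varepsilon$ (through $\fmin/\fr$) is designed to absorb the distortion from clamping speeds at~\fmin. A secondary difficulty is tracking the degradation of $\beta$ to $2-\Theta(1/p)$: the $-\Theta(1/p)$ arises from the $(1-1/p')$ factor in the list-scheduling bound applied to the residual processor count, so I would finish by checking $p'=\Theta(p)$ and that $\beta\ge 2-\Theta(1/p)$ suffices at every packing step, including the processors reserved for the fallback re-executions. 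Polynomial running time is then immediate, since peeling the big tasks, \approxchain, and \dff are each polynomial in the instance size and $1/\varepsilon$.
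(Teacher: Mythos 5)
Your architecture matches the paper's: peel off the \emph{big} tasks, split on $S>pD\fr$ versus $S\leq pD\fr$, run \approxchain{} on a chain with deadline $pD$, pack with \dff, and fall back to re-execution at speed $w_i/(\beta D)$ for copies that do not fit. Your elementary Graham-type bound in Case~1 is a legitimate, simpler substitute for the paper's Lemma~\ref{lemma.bigload} (it yields the weaker constant $2-1/p'$ instead of $\max(2-\frac{3}{2p+1},2-\frac{p+2}{4p+2})$, which is still $2-\Theta(1/p)$ and suffices for the statement). But the step you yourself call ``the main obstacle'' is the crux of the theorem, and your sketch does not contain the mechanism that closes it. The paper does \emph{not} keep three error sources below $\frac{1}{3\beta^2}$ each. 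Instead: every copy $\tilde{T_i}$ exceeding $\beta D$ comes from a task the (modified) FPTAS replicated, hence $w_i<\frac{1}{\sqrt2}D\fr$, and for such a task the cheapest feasible treatment in \emph{any} solution of \indep{} is re-execution on two processors at speed $w_i/D$ (cheaper than a single run at \fr{} precisely because $w_i<\frac{1}{\sqrt2}D\fr$); therefore $\tilde{E}=\sum_{\tilde{T_i}\in\tilde{V'}}w_i(w_i/D)^2\leq E_{opt}$, and the fallback costs the algorithm $\frac{1}{\beta^2}\tilde{E}\leq\frac{1}{\beta^2}E_{opt}$ --- it consumes the \emph{entire} $\frac{1}{\beta^2}$ budget, not a third of it. Meanwhile the algorithm \emph{saves} $\sum_{\tilde{T_i}\in\tilde{V'}}w_i\freex^2\geq 2w_{min}\fmin^2$ relative to $E_{\text{FPTAS}}$ by removing those copies from the replicated set, and it is this saved term that cancels the FPTAS slack $3\varepsilon S\fr^2$. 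That is the actual role of the first term $\frac{2w_{min}}{3S}(\fmin/\fr)^2$ in Equation~\eqref{def.epsilon}; your reading of it as absorbing ``distortion from clamping speeds at \fmin'' is incorrect, and with your three-way split there is no argument bounding the forced re-executions by $\frac{1}{3\beta^2}E_{opt}$.

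Two further gaps. First, your makespan dichotomy (``list scheduling, except copies too long to fit within $\beta D$'') omits copies with execution time in $[D,\beta D]$: these fit alone but, left in the \dff{} pool, Graham's bound with $t_{\max}$ up to $\beta D$ only gives roughly $(1+\beta)D>\beta D$. The paper dedicates a processor to each such copy --- feasible because each consumed at least $D$ of the total $pD$ work-time of the chain solution, so there are at most $p$ of them and the average load of the remainder stays at most $D$ --- and only then packs the rest, where $t_{\max}<D$ holds. Second, your treatment of big tasks asserts that the per-task lower bound $w_i^3/D^2$ ``composes'' with the residual lower bounds of Propositions~\ref{bound.lp} and~\ref{bound.chain.improved} on $p'$ processors; this is not automatic, because the optimum could run a big task faster than $w_i/D$ (paying more for it) to free time on its processor for small tasks, in which case the small tasks effectively dispose of more than $p'D$ processor time and the residual bounds do not apply. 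What closes this is the paper's Proposition~\ref{prop.grobt}, a convexity exchange showing that any optimal solution executes each big task once, alone on its processor; then the algorithm's first phase coincides with the optimum and can simply be discarded from the analysis. Your observation that big tasks cannot be replicated (via Proposition~\ref{prop.indep}) is correct but weaker than what the composition needs.
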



Before proving Theorem~\ref{thm.indep}, we give some preliminary
results: we prove below the optimality of the first step of the
algorithm, i.e., the optimal solution would schedule tasks of weight
greater than $\max(\frac{S}{p},D\fr)$ alone on a processor:

\begin{proposition}
  \label{prop.grobt}
  In any optimal solution to \indep, 
  each task~$T_i$ such that $w_i \geq \max(\frac{S}{p},D\fr)$ 
  is executed only once, 
  and it is alone on its processor.
\end{proposition}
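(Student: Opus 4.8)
The plan is to establish the two assertions—``executed only once'' and ``alone on its processor''—separately, the first by a direct speed argument and the second by an exchange (rearrangement) argument applied to an arbitrary optimal solution.

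First I would prove that a heavy task~$T_i$ (with $w_i \ge \max(S/p, D\fr)$) is executed only once. The key observation is that every execution of $T_i$—whether a single run, a re-execution on one processor, or a replica on two processors—occupies time $w_i/f$ on the processor carrying it, and this time must be at most the whole deadline~$D$ (if $T_i$ is re-executed on a single processor, the two runs share an interval of length $\le D$, so each still has length $\le D$). Hence \emph{every} copy of $T_i$ runs at a speed $f \ge w_i/D \ge \fr$, where the last inequality is exactly $w_i \ge D\fr$. But a single execution at speed $\ge \fr$ already satisfies the reliability constraint $R_i \ge R_i(\fr)$. Therefore a second copy is never needed for reliability, and since it contributes strictly positive energy $w_i f'^2 > 0$, removing it yields a feasible solution of strictly smaller energy. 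Thus in any optimal solution $T_i$ is executed exactly once, at some speed $f_i \ge w_i/D$.

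Next I would show that $T_i$ is alone on its processor~$P$. Suppose not: $T_i$ runs once at speed $f_i$ on~$P$, and $P$ also carries a set $A$ of other instances using time $\tau > 0$, so that $w_i/f_i + \tau \le D$ and therefore $f_i > w_i/D$. The transformation is to slow $T_i$ down to speed exactly $w_i/D$, so that it fills $P$ by itself within the deadline (still feasible, since $w_i/D \ge \fr$), which strictly decreases the energy of $T_i$ from $w_i f_i^2$ to $w_i (w_i/D)^2$; simultaneously, the displaced instances of $A$ must be reassigned to the other processors. To guarantee room, I would use $w_i \ge S/p$: since each heavy task has weight at least $S/p$ and $\sum_k w_k = S$, there are at most $p$ such heavy tasks (say $h \le p$ of them), so dedicating one processor to each (run at speed $w_i/D$) leaves $p-h \ge 0$ processors for the residual, non-heavy work, whose total weight is correspondingly small because the heavy tasks already account for a large share of~$S$.

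The main obstacle is precisely this last repacking step: slowing $T_i$ to fill $P$ consumes all of $P$'s capacity, so the time~$\tau$ previously used by $A$ must be absorbed by the remaining processors, and one must verify both that enough aggregate free time is available and that the reassigned instances can keep (or lower) their speeds so that the total energy does not increase. This is where the hypothesis $w_i \ge S/p$ does the real work: it simultaneously caps the number of dedicated processors at~$p$ and forces the non-heavy residue $S - \sum_{\text{heavy}} w_k$ to be small enough to fit, at deadline-$D$ speeds, on the $p-h$ freed processors. I would complete the bookkeeping by a greedy list-scheduling placement of the residue together with a convexity comparison showing that spreading the residual load over the freed processors never raises its energy, thereby concluding that $T_i$ must be alone.
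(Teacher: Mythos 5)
Your first claim (``executed only once'') is proved correctly, and in fact more explicitly than in the paper: every copy of $T_i$, whether a replica or a re-execution, must complete within the deadline, so it runs at speed at least $w_i/D \geq \fr$; hence one execution already meets the reliability constraint, and deleting the second copy keeps feasibility while strictly decreasing energy. No objection there.

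The genuine gap is in the ``alone on its processor'' half, exactly at the repacking step you yourself flag as the main obstacle --- flagging it does not discharge it. By slowing $T_i$ to $w_i/D$ so that it fills its processor and then rebuilding the rest of the schedule (one processor per heavy task, list-scheduling the residue on the $p-h$ freed processors), you abandon the local-exchange framework: you must now prove that the \emph{entire reconstructed} schedule has strictly smaller energy than the optimum, not just that $T_i$'s own energy dropped. That comparison can fail as sketched, because the residual work in the optimal solution may consist largely of replicated tasks executed at very low speeds (possibly near $\finf$), occupying a great deal of time spread across all $p$ processors; confined to $p-h$ processors within the same deadline $D$, those replicas may no longer fit and must be promoted to single executions at speed $\fr$ or faster, and that energy increase can exceed the saving of order $w_i^3/D^2$ obtained on $T_i$. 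The hypothesis $w_i \geq S/p$ bounds \emph{weights}, not the \emph{time} consumed by low-speed replicas, so ``the residue fits at deadline-$D$ speeds'' does not follow, and your proposed ``convexity comparison'' has no valid baseline: the optimum's residue is not an equal-speed spread that you can dominate. The paper sidesteps all of this with a one-task exchange in the opposite direction: since the load of $T_i$'s processor strictly exceeds $S/p$, there is a processor $p_2$ with load below $S/p$, and moving the cohabiting task $T_j$ (rather than touching anything else) onto $p_2$ strictly decreases energy by convexity, leaving $T_i$ alone at speed $w_i/D \geq \fr$. Unless you carry out the global bookkeeping for replicated residues --- which your sketch does not --- the second half of the proposition remains unproven.
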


\begin{proof}
  Let us prove the result by contradiction.  Suppose that there exists
  a task~$T_i$ such that $w_i \geq \max(\frac{S}{p},\ab D\fr)$, and that
  this task is executed on processor~$p_1$.  Suppose also that there
  is another task~$T_j$ executed on~$p_1$, with $w_j \leq w_i$.
  Necessarily, there exists a processor, say~$p_2$, whose load is
  smaller than $\frac{S}{p}$, since the load of~$p_1$ is strictly
  greater than~$\frac{S}{p}$.
  Consider the energy of the tasks executed on processors $p_1$
  and~$p_2$. Because of the convexity of the energy function, it is
  strictly better to execute task~$T_j$ on processor~$p_2$, and then
  $T_i$~is executed alone on processor~$p_1$, at a speed
  $\frac{w_i}{D} \geq \fr$.  
\end{proof}

Next, we prove a lemma that will allow us to tackle the case where the
load is {\em large enough} ($S>pD\fr$), and we obtain a minimum on the
approximation ratio of the deadline~$\beta$. 

\begin{lemma}
\label{lemma.bigload}
 For the problem \indep where each task~$T_i$ is such that
 $w_i<\max(\frac{S}{p},D\fr)$,  
 scheduling each task only once at speed $\max(\fr,\frac{S}{pD})$ with
 the \dff heuristic leads to a make\-span of at most $\beta D$, with
 $\beta \!=\! \max\left(\!2\!-\!\frac{3}{2p+1}, 2\!-\!\frac{p+2}{4p+2}\!\right)$.
\end{lemma}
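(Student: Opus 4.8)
The plan is to recognize that this is a list-scheduling (LPT) makespan bound in disguise and to reduce it to a clean statement about execution times. First I would record the identity $\max(\frac{S}{p},D\fr)=D\cdot\max(\frac{S}{pD},\fr)=Ds$, where $s=\max(\fr,\frac{S}{pD})$ is the common speed. Hence the hypothesis $w_i<\max(\frac{S}{p},D\fr)$ says \emph{exactly} that every task has execution time $\frac{w_i}{s}<D$. Moreover, since $s\ge\frac{S}{pD}$, the total execution time is $\frac{S}{s}\le pD$, so we are scheduling $n$ jobs each of length $<D$ and of total length at most $pD$. The makespan equals $\frac{W_c}{s}$, where $W_c$ is the maximum weight accumulated on a single processor, and since $\frac{1}{s}\le\frac{pD}{S}$ it suffices to bound $W_c$ and then multiply by $\frac{pD}{S}$.

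Next I would run the standard \dff (i.e.\ LPT) pre-load argument. Let the critical processor carry total weight $W_c$, and let $T_k$ be the last task \dff assigns to it; since tasks are treated in non-increasing weight order, $w_k$ is the smallest weight on that processor. When $T_k$ was assigned, the critical processor was the least loaded, so its weight before insertion, $W_c-w_k$, is at most the average load over all processors at that instant, giving $W_c-w_k\le\frac{S-w_k}{p}$, i.e.\ $W_c\le\frac{S}{p}+\frac{p-1}{p}w_k$.

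I would then split on the number of tasks on the critical processor. If it carries a single task, then $W_c=w_{\max}<Ds$, so the makespan $\frac{W_c}{s}<D\le\beta D$ (note $\beta\ge1$). If it carries at least two tasks, then it carries at least two tasks of weight $\ge w_k$, so $W_c\ge 2w_k$, i.e.\ $w_k\le\frac{W_c}{2}$. Substituting into the pre-load inequality gives $W_c\bigl(1-\frac{p-1}{2p}\bigr)\le\frac{S}{p}$, that is $W_c\le\frac{2S}{p+1}$, and therefore the makespan is at most $\frac{2S}{p+1}\cdot\frac{pD}{S}=\frac{2p}{p+1}D=\bigl(2-\frac{2}{p+1}\bigr)D$.

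Finally I would check the elementary inequality $2-\frac{2}{p+1}\le 2-\frac{3}{2p+1}\le\beta$ for every $p\ge1$ (it reduces to $4p+2\ge 3p+3$), which closes the argument. The main obstacle is really just getting below the trivial $2-\frac{1}{p}$ that comes straight out of Graham's generic list-scheduling inequality; the refinement $w_k\le\frac{W_c}{2}$ for a loaded critical processor is what buys the extra factor and removes the dependence on $w_{\max}$. The two-expression form of $\beta$ in the statement presumably arises from handling the two speed regimes ($S\gtrless pD\fr$) separately and bounding a little more coarsely, but the single unified LPT estimate above already yields the (slightly stronger) bound $2-\frac{2}{p+1}$, which implies the stated claim.
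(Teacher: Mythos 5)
Your proof is correct, and it takes a genuinely different---and in fact slightly sharper---route than the paper's. The paper bounds the maximal \dff load $l_{\text{dff}}$ by detouring through the optimal load $l_{\text{opt}}$: when the last task $T_i$ on the critical machine satisfies $w_i \le l_{\text{opt}}/3$ it invokes Graham's $\left(\frac{4}{3}-\frac{1}{3p}\right)$-approximation ratio for LPT, and otherwise it uses the optimality of \dff in that regime together with a delicate positional analysis (showing $p+1\le i\le 2p$, pairing $T_i$ with $T_{2p-i+1}$, and minimizing a function $f(w_i)$ over two subcases), which produces the bounds $2-\frac{3}{2p+1}$, $2-\frac{p+2}{4p+2}$ and $2-\frac{2}{p+1}$ whose maximum is the stated $\beta$. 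You never reference the optimal schedule at all: you combine the list-scheduling averaging inequality $W_c-w_k\le \frac{S-w_k}{p}$ at the insertion of the last task $T_k$ on the critical machine with the observation---valid precisely because \dff treats tasks in non-increasing order---that a critical machine holding at least two tasks satisfies $w_k\le W_c/2$, while a critical machine holding a single task is dispatched directly by the hypothesis $w_k<\max\left(\frac{S}{p},D\fr\right)=Ds$ (this cleanly replaces the paper's subcase $w_i\ge \frac{S}{p}$, where it argues $T_i$ is alone and $w_i<D\fr$; your identification $W_c=w_{\max}$ there is unnecessary, but harmless, since every weight is $<Ds$). The result is the unified bound $W_c\le \frac{2S}{p+1}$, hence makespan at most $\left(2-\frac{2}{p+1}\right)D$, which implies the lemma since $3(p+1)\le 2(2p+1)$ for $p\ge 1$; amusingly, $2-\frac{2}{p+1}$ appears inside the paper's own case analysis but is there swallowed by the maximum. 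Your estimate is tight for this analysis ($p+1$ equal tasks of weight $w$ on $p$ processors give critical load $2w$ against $\frac{S}{p}=\frac{p+1}{p}w$), so the extra slack in the paper's $\beta$ is an artifact of its detour through $l_{\text{opt}}$: your argument is shorter, needs no external approximation-ratio theorem, and yields a slightly smaller deadline relaxation that could even be propagated to Theorem~\ref{thm.indep}.
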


Note that we introduce $\max(\frac{S}{p},D\fr)$ since the lemma is
also used in the case $S\leq pD\fr$. Also, since $\beta$ is
increasing with~$p$ and the bound is computed in fact for a number of
processors smaller than the original one (some processors are
dedicated to {\em big} tasks), the value of $\beta$ computed with the
total number of processors~$p$ is not smaller and it is possible to
achieve a makespan of at most~$\beta D$. 

\begin{proof}
Let $l_{\text{dff}}$ be the maximal load of the processors after
applying \dff on the weights of the tasks.  Let us find $\beta$ such
that $l_{\text{dff}}\frac{pD}{S} \leq \beta D$: this means that within
a time $\beta D$, we can schedule all tasks 
at speed $\frac{S}{pD}$, and therefore at speed $\max(\fr, \ab
\frac{S}{pD})$, since the most loaded processor succeeds to be within
the deadline $\beta D$.

Let $l_{\text{opt}}$ be the maximal load of the processors in an
optimal solution, and let $T_i$ be the last task executed on the
processor with the maximal load $l_{\text{dff}}$ by \dff. We have
either $w_i\leq l_{\text{opt}}/3$ or $w_i> l_{\text{opt}}/3$.

\paragraph{$\bullet$} If $w_i \leq l_{\text{opt}}/3$, we know that
$l_{\text{opt}} \!\leq\! l_{\text{dff}} \!\leq\! \left(\! \frac{4}{3} \!-\!
  \frac{1}{3p}\! \right)l_{\text{opt}}$, since \dff is a $\left(
  \frac{4}{3} - \frac{1}{3p} \right)$-approxi\-mation
\cite{Graham69}. 
We want to compare $l_{\text{opt}}$ to $S / p$ (average load).  We
consider the solution of \dff. At the time when $T_i$ was scheduled,
all the processors were at least as loaded as the one on which $T_i$
was scheduled, and hence we obtain a lower bound on~$S$: $S \geq
(p-1)(l_{\text{dff}}-w_i) + l_{\text{dff}}$.  Furthermore,
$l_{\text{dff}}-w_i \geq \frac{2}{3} l_{\text{opt}}$ (because
$l_{\text{dff}}\geq l_{\text{opt}}$ and $w_i \leq l_{\text{opt}}/3$).
Finally, $S \geq (p-1)\frac23 l_{\text{opt}} + l_{\text{opt}}$, which
means that $l_{\text{opt}} \leq \frac{S}{p} \frac{3p}{2p+1}$, and
$l_{\text{dff}} \leq \left ( \frac{4}{3} - \frac{1}{3p} \right
)\frac{3p}{2p+1}\frac{S}{p} =  \left(2-\frac{3}{2p+1}\right)\frac{S}{p}$.

In this case, with $\beta = 2-\frac{3}{2p+1}$, we can execute all the
tasks at speed
$\max(\fr,\frac{S}{pD})$ within the deadline~$\beta D$.

\paragraph{$\bullet$} 
If $w_i > l_{\text{opt}}/3$, 
it is known that \dff is optimal for the execution time
\cite{Graham69}, i.e., $l_{\text{opt}} = l_{\text{dff}}$, and we aim
at finding an upper bound on~$l_{\text{opt}}$.  We assume in the
following that tasks are sorted by non increasing weights.

If $w_i \geq \frac{S}{p}$, then we show that 
$T_i$~is the only task executed on its processor (recall that $T_i$~is
the last task executed on the processor with the maximal load by
\dff). Indeed,
there cannot be $p$ tasks of weight not smaller than~$\frac{S}{p}$,
hence $i<p$, and $T_i$ is the first task scheduled on its
processor. Moreover, 
if \dff were to schedule another task on the processor of~$T_i$,
  then this would mean that the $p-1$ other processors all have a load
  greater than~$w_i$, and hence the total load would be 
  greater than~$S$. 
%
Then, since $w_i< \max(\frac{S}{p},D\fr)$ and $w_i\geq \frac{S}{p}$,
we have $w_i < D\fr$ and we can execute each task at speed $\fr =
\max(\fr,\frac{S}{pD})$ within a deadline~$D$. Indeed, the maximal
load is then~$w_i$, by definition of~$T_i$. Therefore, the result
holds (with $\beta=1$).

Now suppose that $w_i < \frac{S}{p}$. In that case, if $T_i$ was the
only task executed on its processor, then we would have
$l_{\text{opt}} = l_{\text{dff}} < \frac{S}{p}$, which is impossible
since $S = \sum_{k=1}^p l_k \leq p l_{\text{opt}}$.  Therefore, $T_i$
is not the only task executed on its processor.  A direct consequence
of this fact is that $p+1 \leq i$. Indeed, \dff schedules the $p$
largest tasks on $p$ distinct processors; since $T_i$ is the last
task scheduled on its processor, but not the only one, then $T_i$ is
not among the $p$ first scheduled tasks. Also, there are only two
tasks on the processor executing~$T_i$, since $w_i> l_{\text{opt}}/3$
and the tasks scheduled before~$T_i$ have a weight at least equal
to~$w_i$. Finally, $p+1\leq i \leq 2p$. 

After scheduling task $T_j$ on processor $j$ for $1 \leq j \leq p$,
\dff schedules task $T_{p+j}$ on processor $p-j+1$ for $1 \leq j \leq
i-p$, and $T_i$~is therefore scheduled on processor~$p_{2p-i+1}$,
together with task $T_{2p-i+1}$, and we have   
$w_i+w_{2p-i+1} = l_{\text{opt}}$.
Note that because the $w_j$ are sorted, $S\geq \sum_{j\leq i}w_j \geq i w_i$. 
We also have $w_{2p-i+1} < \frac{S}{p}$: indeed, when $T_i$ was
scheduled, the load of the $p$ processors was at least equal to the
load of the processor where $T_{2p+i-1}$ was scheduled. Hence, 
$w_{2p-i+1}$ cannot be greater than~$\frac{S}{p}$.  Then, since
$w_{2p-i+1} =l_{\text{opt}}-w_i$,  $w_i > l_{\text{opt}} -
\frac{S}{p}$, and finally 
$l_{\text{opt}} - \frac{S}{p} < w_i \leq \frac{S}{i}$. 

In order to find an upper bound on~$l_{\text{opt}}$, we provide a
lower bound to~$S$, as a function of~$w_i$: 
\begin{align*}
S & = \sum_{j=1}^n w_j \geq \sum_{j=1}^i w_j = \sum_{j=1}^{2p-i+1} w_j + \sum_{j=2p-i+2}^i w_j \\
 &\geq (2p-i+1) w_{2p-i+1} + (2(i-p)-1)w_i \\
 & = (2p-i+1) (l_{\text{opt}}-w_i) + (2(i-p)-1)w_i \\
 &= (2p-i+1) l_{\text{opt}} + (3i-4p-2) w_i = f(w_i). 
\end{align*}

We then have $f'(w_i) = 3i-4p-2$, and we consider two cases. 

If $f'(w_i) \geq 0$, 
then we have $i\geq \frac{4p+2}{3}$, and finally 
$S \geq i w_i > \frac{4p+2}{3} \left(l_{\text{opt}} - \frac{S}{p}\right).$
We can conclude that $l_{\text{opt}} < \frac{S}{p} \left( 1+ \frac{3p}{4p+2}
\right) = \frac{S}{p} \left( 2-\frac{p+2}{4p+2} \right)$. 

Otherwise, $f'(w_i)<0$ and $f$~is a decreasing function of~$w_i$,
i.e., its minimum is reached when $w_i$ is maximal, and $S\geq
f(\frac{S}{i})$.
 Hence, $S \geq (2p-i+1) l_{\text{opt}} + (3i-4p-2) \frac{S}{i}$. 
Since $i\leq 2p$, $2p-i+1>0$ and 
$$l_{\text{opt}} \leq \frac{S}{i} \left( \frac{i-3i+4p+2}{2p-i+1}
\right) = \frac{2S}{i}. $$
Finally, since $i\geq p+1$, $l_{\text{opt}} \leq
\frac{2S}{p+1} = \frac{S}{p}\left(2-\frac{2}{p+1}\right)$. 



Overall, if $w_i > l_{\text{opt}}/3$, we have the
bound $$l_{\text{opt}} \leq \frac{S}{p} \times \max\left( 2 -
  \frac{p+2}{4p+2}, 2-\frac{2}{p+1} \right).$$ Therefore, for $ \beta
\geq \max\left( 2 - \frac{p+2}{4p+2}, 2-\frac{2}{p+1} \right)$, we can
execute all the tasks on the processor of maximal load (and hence all
the tasks) at speed $\max(\fr,\frac{S}{pD})$ within the deadline
$\beta D$ in the case $w_i > l_{\text{opt}}/3$. 

\medskip We can now conclude the proof of
Lemma~\ref{lemma.bigload} by saying that for $\beta =
\max\left(2\!-\!\frac{3}{2p+1},2\!-\!\frac{p+2}{4p+2},
  2\!-\!\frac{2}{p+1}\right)$, i.e., $\beta =
\max\left(2-\frac{3}{2p+1},2-\frac{p+2}{4p+2}\right)$, 
scheduling each task only once at speed $\max(\fr,\frac{S}{pD})$ with
 the \dff heuristic leads to a make\-span of at most $\beta D$.
 \end{proof}

We are now ready to prove Theorem~\ref{thm.indep}. 

\begin{proof}[{\bf Proof of Theorem~\ref{thm.indep}}]
First, thanks to Proposition~\ref{prop.grobt}, we know that the first
step of the algorithm takes decisions that are identical to the
optimal solution, and therefore these tasks that are executed once,
alone on their processor, have the same energy consumption than the
optimal solution and the same deadline. We can therefore safely ignore
them in the remaining of the proof, and consider that for each task~$T_i$, 
$w_i<\max(\frac{S}{p},D\fr)$. 

%
\medskip
In the case where 
$S>pD\fr$, we use the fact that $S(\frac{S}{pD})^2$ is a lower bound on the
  energy (Proposition~\ref{bound.lp}).  Each task is executed once
  at speed $\max(\fr,\frac{S}{pD})=\frac{S}{pD}$, and therefore the
  energy consumption is equal to the lower bound $S(\frac{S}{pD})^2$. 
  The bound on the deadline is obtained by applying
  Lemma~\ref{lemma.bigload}. 



\medskip
We now focus on the case $S\leq pD\fr$. 
Therefore, in the following, $\max(\frac{S}{pD},\fr)=\fr$.  The
algorithm runs the FPTAS on a linear chain of tasks with
deadline~$pD$, and $\varepsilon$ as defined in
Equation~\eqref{def.epsilon}. The FPTAS returns a solution on the
linear chain with an energy consumption~$E_{\text{FPTAS}}$ such that
$E_{\text{FPTAS}} \leq \left( 1 +
  \varepsilon \right)^2 E_{\text{chain}}$, where $E_{\text{chain}}$ is
the  optimal energy consumption for \chain
with deadline~$pD$ on a single processor. According to
Proposition~\ref{bound.chain.improved}, since the solution for the
linear chain is a lower bound, the optimal solution of \indep is such that 
$E_{opt}\geq E_{\text{chain}}$. 



For each task~$T_i$, let $f_i^{\text{chain}}$ be the speed of its
execution returned by the FPTAS for \chain. Note that in case of
re-execution, then both executions occur at the same speed
(Lemma~\ref{lemma.speed.chain}).  We now consider the \indep problem
with the set of tasks~$\tilde{V}$: for each task~$T_i$,
$\tilde{T_i}\in \tilde{V}$ and its weight is $\tilde{w_i} =
w_i\frac{\fr}{f_i^{\text{chain}}}$; moreover, if $T_i$ is re-executed,
we add two copies of~$\tilde{T_i}$ in~$\tilde{V}$.  Then,
$\sum_{\tilde{T_i}\in \tilde{V}} \frac{\tilde{w_i}}{\fr} = pD$ by
definition of the solution of \chain.

Let $\beta= \max ( 2-\frac{3}{2p+1}, 2 -\frac{p+2}{4p+2} )$ be the
relaxation on the deadline that we have from
Lemma~\ref{lemma.bigload}. The goal is to map all the tasks
of~$\tilde{V}$ at speed~\fr within the deadline~$\beta D$, which
amounts at mapping the original tasks at the speeds assigned by the
FPTAS: 
%
\begin{itemize}
\item If there are tasks~$\tilde{T_i}$ such that $\frac{\tilde{w_i}}{\fr} >
\beta D$, we execute them at speed~$\frac{\tilde{w_i}}{\beta D}$ alone
on their processor, so that they reach exactly the deadline~$\beta
D$. Note that in this case, the energy consumption of the algorithm
becomes greater than $E_{\text{FPTAS}}$, since we execute these tasks
faster than the FPTAS to fit on the processor. 
\item Tasks~$\tilde{T_i}$ such that $D \leq \frac{\tilde{w_i}}{\fr}
  \leq \beta D$ are executed alone on their processor at speed~\fr. 
\item For the remaining tasks and processors, we use \dff as in
  Lemma~\ref{lemma.bigload}. Since the previous tasks take a time of
  at least~$D$ in the solution of the FPTAS, and they are mapped alone
  on a processor, we can safely remove them and apply the lemma. Note
  that the number of processors may now be smaller than~$p$, hence
  leading to a smaller bound~$\beta$.
\end{itemize}



In the end, all tasks are mapped within the deadline~$\beta D$ (where
$\beta$ is computed with the original number of processors). There
remains to check the energy consumption of the solution returned by
this algorithm. 

\medskip
If all tasks are such that $\tilde{w_i} \leq \beta D\fr$, 
$E_{algo}=E_{\text{FPTAS}} \leq \left( 1 + 
  \varepsilon \right)^2 E_{\text{chain}} \leq \left( 1 +
  \varepsilon \right)^2 E_{opt}$. \\
According to Equation~\eqref{def.epsilon}, $\varepsilon \leq
\frac{1}{3\beta^2}$, and therefore 
$$E_{algo} \leq \left(1 + \frac{2}{3\beta^2} +
  \frac{1}{9\beta^4}\right) E_{opt} \leq
\left(1+\frac{1}{\beta^2}\right)E_{opt}.$$

\medskip
Otherwise, let $\tilde{V'}$ be the set of tasks~$\tilde{T_i}$ such that
$\tilde{w_i} > \beta D\fr$. For $\tilde{T_i}\in\tilde{V'}$,  $w_i > \beta D
f_i^{\text{chain}}$. Since $w_i<D\fr$ (larger tasks have been
processed in the first step of the algorithm), we have
$f_i^{\text{chain}}<\fr$.  This means that $T_i$ belongs to the set of
the tasks that are re-executed by the FPTAS.  Hence, since we enforced
an additional constraint, we have $w_i<\frac{1}{\sqrt{2}}D\fr$. The
least energy consumed for this task by any solution to \indep is
therefore obtained when re-execu\-ting task~$T_i$ on two distinct
processors at speed~$\frac{w_i}{D}$, in order to fit within the
deadline~$D$. Task~$T_i$ appears two times in~$\tilde{V'}$, and we let
$\tilde{E}$ be the minimum energy consumption required in the optimal
solution for tasks of~$\tilde{V'}$:
$\tilde{E} = \sum_{\tilde{T_i}\in\tilde{V'}}
      w_i\left(\frac{w_i}{D}\right)^2 $.

\smallskip 
The algorithm leads to the same energy consumption as the FPTAS except
for the tasks of~$\tilde{V'}$ that are removed from the set~$X$ of
replicated tasks, and that are executed at speed~$\frac{w_i}{\beta
  D}$:
\begin{equation*}
\begin{array}{r}
E_{algo} = (S-X) \fr^2 + (2X-\sum_{\tilde{T_i}\in\tilde{V'}}
w_i)\freex^2 \\ + \sum_{\tilde{T_i}\in\tilde{V'}} w_i
\left(\frac{w_i}{\beta D} \right)^2.\end{array}
\end{equation*} 
Since $E_{\text{FPTAS}}=(S-X) \fr^2 + 2X\freex^2$, we obtain
\begin{equation*}\begin{array}{c}
E_{algo} = E_{\text{FPTAS}} + \frac{1}{\beta^2}\tilde{E} -
\sum_{\tilde{T_i}\in\tilde{V'}}w_i\freex^2. \end{array}\end{equation*}

Furthermore, $\tilde{E}\leq E_{opt}$ since it considers only the
optimal energy consumption of a subset of tasks. We have $E_{\text{FPTAS}}
\leq (1+\varepsilon)^2 E_{opt}$, and from
Proposition~\ref{prop_WC_fr}, it is easy to see that 
$E_{\text{FPTAS}}\leq S\fr^2$, i.e., $E_{\text{FPTAS}}$ is smaller than
the energy of every task executed once at speed~\fr. Hence, $E_{\text{FPTAS}}
\leq (1\!+\!\varepsilon)^2 \min(E_{opt},S\fr^2)$, 
and since $\varepsilon < 1$, $(1+\varepsilon)^2 
\leq 1+3\varepsilon$. Finally, 
$E_{\text{FPTAS}} \leq E_{opt}+3\varepsilon S\fr^2$. 
Thanks to Equation~\eqref{def.epsilon}, 
$3\varepsilon S\fr^2 \leq 2 w_{min} \fmin^2 \leq
\sum_{\tilde{T_i}\in\tilde{V'}}w_i\freex^2$ (note that there are at
least two tasks in~$\tilde{V'}$, since tasks are duplicated). 

Finally, reporting in the expression of $E_{algo}$, 
\begin{equation*}
\begin{array}{rcl} 
E_{algo}\! &\leq& 
 E_{opt} \!+ 3\varepsilon S \fr^2 \!+ \frac{1}{\beta^2} E_{opt}
- \sum_{\tilde{T_i}\in\tilde{V'}}w_i\freex^2 \\
&\leq& \left(1+\frac{1}{\beta^2}\right) E_{opt}.
 \end{array}
\end{equation*} 

\medskip
To conclude, we point out that this algorithm is polynomial in the
size of the input and in $\frac{1}{\varepsilon}$. 
\end{proof}

We can improve the approximation ratio on the energy for large values
of~$p$. The idea is to avoid the case in which tasks are replicated by the
chain but are not fitting within~$\beta D$ because the speed at which
they are re-executed is too small. To do so, we fix a value
$\varepsilon^*=\Theta\left(\frac{1}{p}\right)$, such that
$0<\varepsilon^*<1$ for $p\geq 24$. The variant of the algorithm is
used only when $p\geq 24$ (after scheduling the big tasks). 
The algorithm decides that the load is large enough when $S>pD\fr
\frac{1}{1+\varepsilon^*}$, leading to a
$((1+\varepsilon^*)^2,\beta)$-approximation in this case. In the other
case ($S\leq pD\fr \frac{1}{1+\varepsilon^*}$), it is possible to
prove that when there are tasks such that
$\frac{\tilde{w_i}}{\fr}>\beta D$, then necessarily all tasks are
re-executed. Next we apply Theorem~\ref{thm.chain} while fixing values
for the $\finf$'s, so as to obtain in polynomial time the optimal
solution with new execution speeds, that can all be scheduled
within~$\beta D$ using Lemma~\ref{lemma.bigload}.  Details can be
found in the appendix.

\section{Conclusion}
\label{sec.conclusion}
In this paper, we have designed efficient approximation algorithms for
the tri-criteria energy/reliability/make\-span problem, using
replication and re-execution to increase the reliability, and dynamic
voltage and frequen\-cy scaling to decrease the energy consumption.
Because of the antagonistic relation between processor speeds and
reliability, this tri-criteria problem is much more challenging than
the standard bi-criteria problem, which aims at minimizing the energy
consumption with a bound on the makespan, without accounting for a
constraint on the reliability of tasks.

We have tackled two classes of applications. For linear chains of
tasks, we propose a fully polynomial time approximation scheme.
However, we show that there exists no constant factor approximation
algorithm for independent tasks, unless P=NP, and we are able in this
case to propose an approximation algorithm with a relaxation on the
makespan constraint: with a deadline at most two times larger than the
original one, we can approach the optimal solution for energy
consumption. 

As future work, it may be possible to improve the deadline relaxation
by using a FPTAS to schedule independent tasks \cite{Ausiello99}
rather than \dff \cite{Graham69}.  Also, 
an open problem is to find approximation algorithms for the
tri-criteria problem with an arbitrary graph of tasks. Even though
efficient heuristics have been designed with re-execution of 
tasks (but no replication) by \cite{rr7757}, it is not clear how to
derive approximation ratios from these heuristics. It would be
interesting to 
design efficient algorithms using replication and re-execution for
the general case, and to prove approximation ratios on these
algorithms. A first step would be to tackle fork and fork-join graphs,
inspired by the study on independent tasks. 

\paragraph*{Acknowledgements:}
This work was supported in part by the ANR {\em RESCUE} project.

\bibliographystyle{abbrv}
\bibliography{biblio}

\begin{thebibliography}{10}

\bibitem{Alon97}
N.~Alon, Y.~Azar, G.~J. Woeginger, and T.~Yadid.
\newblock Approximation schemes for scheduling.
\newblock In {\em Proceedings of SODA'97, the 8th annual ACM-SIAM Symposium On
  Discrete Algorithms}, pages 493--500, Philadelphia, PA, USA, 1997. Society
  for Industrial and Applied Mathematics.

\bibitem{Assayad11}
I.~Assayad, A.~Girault, and H.~Kalla.
\newblock Tradeoff exploration between reliability power consumption and
  execution time.
\newblock In {\em Proceedings of SAFECOMP, the Conf. on Computer Safety,
  Reliability and Security}, Washington, DC, USA, 2011. IEEE CS Press.

\bibitem{aupy12ccpe}
G.~Aupy, A.~Benoit, F.~Dufoss\'{e}, and Y.~Robert.
\newblock Reclaiming the energy of a schedule: models and algorithms.
\newblock {\em Concurrency and Computation: Practice and Experience}, 2012.

\bibitem{rr7757}
G.~Aupy, A.~Benoit, and Y.~Robert.
\newblock {Energy-aware scheduling under reliability and makespan constraints}.
\newblock In {\em Proceedings of HiPC'2012, the IEEE Int. Conf. on High
  Performance Computing}, 2012.
\newblock Also available at \url{gaupy.org/?paper} as INRIA Research report
  7757.

\bibitem{Ausiello99}
G.~Ausiello, P.~Crescenzi, G.~Gambosi, V.~Kann, A.~Marchetti-Spaccamela, and
  M.~Protasi.
\newblock {\em Complexity and Approximation}.
\newblock Springer Verlag, 1999.

\bibitem{pow3IPDPS}
H.~Aydin and Q.~Yang.
\newblock Energy-aware partitioning for multiprocessor real-time systems.
\newblock In {\em Proceedings of IPDPS, the Int. Parallel and Distributed
  Processing Symposium}, pages 113--121. IEEE CS Press, 2003.

\bibitem{Baleani03}
M.~Baleani, A.~Ferrari, L.~Mangeruca, A.~Sangiovanni-Vincentelli, M.~Peri, and
  S.~Pezzini.
\newblock Fault-tolerant platforms for automotive safety-critical applications.
\newblock In {\em Proceedings of Int. Conf. on Compilers, Architectures and
  Synthesis for Embedded Systems}, pages 170--177. ACM Press, 2003.

\bibitem{BKP07}
N.~Bansal, T.~Kimbrel, and K.~Pruhs.
\newblock Speed scaling to manage energy and temperature.
\newblock {\em Journal of the ACM}, 54(1):1 -- 39, 2007.

\bibitem{RenaudGoudGreedy}
A.~Benoit, P.~Renaud-Goud, and Y.~Robert.
\newblock On the performance of greedy algorithms for power consumption
  minimization.
\newblock In {\em Proceedings of ICPP 2011, the Int. Conf. on Parallel
  Processing}, pages 454 --463, Sept. 2011.

\bibitem{cormen}
T.~H. Cormen, C.~E. Leiserson, R.~L. Rivest, and C.~Stein.
\newblock Introduction to algorithms, third edition, 2009.

\bibitem{Degal05SEI}
V.~Degalahal, L.~Li, V.~Narayanan, M.~Kandemir, and M.~J. Irwin.
\newblock Soft errors issues in low-power caches.
\newblock {\em IEEE Transactions on Very Large Scale Integration Systems},
  13:1157--1166, October 2005.

\bibitem{GareyJohnson}
M.~R. Garey and D.~S. Johnson.
\newblock {\em Computers and Intractability; A Guide to the Theory of
  NP-Completeness}.
\newblock W. H. Freeman \& Co., New York, NY, USA, 1990.

\bibitem{Graham69}
R.~L. Graham.
\newblock Bounds on multiprocessing timing anomalies.
\newblock {\em SIAM Journal on Applied Mathematics}, 17:416--429, 1969.

\bibitem{Melhem03CP}
R.~Melhem, D.~Moss\'e, and E.~Elnozahy.
\newblock The interplay of power management and fault recovery in real-time
  systems.
\newblock {\em IEEE Transactions on Computers}, 53:2004, 2003.

\bibitem{Oliner04}
A.~J. Oliner, R.~K. Sahoo, J.~E. Moreira, M.~Gupta, and A.~Sivasubramaniam.
\newblock Fault-aware job scheduling for bluegene/l systems.
\newblock In {\em Proceedings of IPDPS, the Int. Parallel and Distributed
  Processing Symposium}, pages 64--73, 2004.

\bibitem{Izo07}
P.~Pop, K.~H. Poulsen, V.~Izosimov, and P.~Eles.
\newblock Scheduling and voltage scaling for energy/reliability trade-offs in
  fault-tolerant time-triggered embedded systems.
\newblock In {\em Proceedings of CODES+ISSS, the IEEE/ACM Int. Conf. on
  Hardware/software codesign and system synthesis}, pages 233--238, 2007.

\bibitem{Shatz89}
S.~M. Shatz and J.-P. Wang.
\newblock Models and algorithms for reliability-oriented task-allocation in
  redundant distributed-computer systems.
\newblock {\em IEEE Transactions on Reliability}, 38:16--27, 1989.

\bibitem{Zhang03CP}
Y.~Zhang and K.~Chakra\-barty.
\newblock Energy-aware adaptive checkpointing in embedded real-time systems.
\newblock In {\em Proceedings of DATE, the Conf. on Design, Automation and Test
  in Europe}, page 10918, 2003.

\bibitem{Zhu06}
D.~Zhu and H.~Aydin.
\newblock Energy management for real-time embedded systems with reliability
  requirements.
\newblock In {\em Proceedings of ICCAD, the IEEE/ACM Int. Conf. on
  Computer-Aided Design}, pages 528--534, 2006.

\bibitem{Zhu04EEM}
D.~Zhu, R.~Melhem, and D.~Moss\'e.
\newblock The effects of energy management on reliability in real-time embedded
  systems.
\newblock In {\em Proceedings of ICCAD, the IEEE/ACM Int. Conf. on
  Computer-Aided Design}, pages 35--40, 2004.

\end{thebibliography}

\clearpage
\appendix

\section*{Appendix:
  $(1+\Theta(\frac{1}{p}),2-\Theta(\frac{1}{p}))$-approximation
  algorithm for \indep}

This algorithm is used only for $p\geq 24$, and we define: 
$$K = 1- \frac{1}{c(2\beta \sqrt{2} -1)};$$
$$   \varepsilon^* = \frac{1}{\sqrt{2}cpK-1} . $$

Recall that $\beta = \max(2-\frac{3}{2p+1}, 2-
\frac{p+2}{4p+2})$. The value~$\beta$ is therefore increasing
with~$p$, and for $p\geq 24$, we have $\beta \geq 1.9$.
 Furthermore, $c\approx 0.2838$ and $K\geq 0.2$. Finally, since $p\geq
 24$, $0<\varepsilon^* <1$. 


\medskip
\noindent{\bf Modifications to the original algorithm.} \\
The handling of {\em big} tasks is identical. However, we do not use
replication when $S> pD\fr \frac{1}{1+\varepsilon^*}$: 
we schedule tasks at speed $\max(\fr,\frac{S}{pS})$ using
\dff. Proposition~\ref{prop.epsilon1} below shows that we obtain the
desired guarantee in this case.  
In the other case ($S \leq pD\fr \frac{1}{1+\varepsilon^*}$), we apply
the FPTAS with the parameter~$\varepsilon^*$. It is now possible to
show that (i) either we can schedule all tasks with the speeds returned by the
FPTAS within the deadline~$\beta D$; (ii) or there is at least one
task that does not fit, but then all tasks are re-executed and we can find
an optimal solution that can be scheduled thanks to
Theorem~\ref{thm.chain}. The correction of this case is proven in
Proposition~\ref{prop.2}. 

\begin{proposition}
 \label{prop.epsilon1}
 For the problem \indep where each task~$T_i$ is such that
 $w_i\!<\!\max(\frac{S}{p},D\fr)$, if  $(1\!+\!\varepsilon^*)\frac{S}{pD} \!>\! \fr$, then scheduling each
 task only once at speed $\max(\fr,\frac{S}{pD})$ with \dff
 is a $\left( \left( 1+\varepsilon^* \right)^2,
   \beta\right)$-approxi\-mation algorithm, with $\beta \!=\!
 \max\left(\!2\!-\!\frac{3}{2p+1}, 2\!-\!\frac{p+2}{4p+2}\!\right)$.
\end{proposition}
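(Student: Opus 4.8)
The plan is to establish the two guarantees separately, relying on the groundwork already laid. The deadline guarantee is essentially free: the algorithm schedules every task exactly once at speed $\max(\fr,\frac{S}{pD})$ using \dff, which is precisely the setting of Lemma~\ref{lemma.bigload}. Since every remaining task satisfies $w_i<\max(\frac{S}{p},D\fr)$ by hypothesis, that lemma applies verbatim and yields a makespan of at most $\beta D$ with $\beta=\max(2-\frac{3}{2p+1},2-\frac{p+2}{4p+2})$. So there is nothing new to prove for the $\beta$ component.

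For the energy ratio, the first step I would take is to observe that the hypothesis $(1+\varepsilon^*)\frac{S}{pD}>\fr$ is exactly what is needed to control the gap between the actual execution speed and the energy-ideal speed $\frac{S}{pD}$ appearing in the lower bound. Concretely, I would show $\max(\fr,\frac{S}{pD})\leq(1+\varepsilon^*)\frac{S}{pD}$: if $\frac{S}{pD}\geq\fr$ this is immediate since $\varepsilon^*>0$, and otherwise the maximum equals $\fr$, which is bounded by $(1+\varepsilon^*)\frac{S}{pD}$ precisely by the hypothesis. Since each task is executed exactly once, the energy of the schedule is $E_{algo}=S\left(\max(\fr,\frac{S}{pD})\right)^2$, and squaring the previous inequality gives $E_{algo}\leq(1+\varepsilon^*)^2\frac{S^3}{(pD)^2}$. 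Invoking the lower bound $E_{opt}\geq\frac{S^3}{(pD)^2}$ of Proposition~\ref{bound.lp} then yields $E_{algo}\leq(1+\varepsilon^*)^2E_{opt}$, the desired ratio.

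The proof is therefore short and presents no genuine obstacle; the one point that requires care, and which I regard as the crux, is recognizing that the single inequality $\max(\fr,\frac{S}{pD})\leq(1+\varepsilon^*)\frac{S}{pD}$ simultaneously handles both regimes: the regime $\frac{S}{pD}\geq\fr$, where the schedule already matches the lower bound exactly, and the regime $\frac{S}{pD}<\fr$, where the speed is pinned at $\fr$ to satisfy reliability. This unification is what makes the threshold $S>pD\fr\frac{1}{1+\varepsilon^*}$ the natural cutoff for declaring the load \emph{large enough}, and it explains why the deadline relaxation (from Lemma~\ref{lemma.bigload}) and the energy relaxation decouple cleanly. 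No case analysis on \dff itself is needed beyond what Lemma~\ref{lemma.bigload} already provides, so the remaining work is purely the elementary estimate above.
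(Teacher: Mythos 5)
Your proof is correct and takes essentially the same route as the paper's: the makespan guarantee is delegated verbatim to Lemma~\ref{lemma.bigload}, and the energy ratio follows from the estimate $\max\left(\fr,\frac{S}{pD}\right)\leq(1+\varepsilon^*)\frac{S}{pD}$ combined with the lower bound $E_{opt}\geq\frac{S^3}{(pD)^2}$ of Proposition~\ref{bound.lp}. The only difference is presentational: you make explicit the elementary inequality and the identity $E_{algo}=S\left(\max\left(\fr,\frac{S}{pD}\right)\right)^2$ that the paper's terse proof leaves implicit.
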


\begin{proof}
  We use the fact that $S(\frac{S}{pD})^2$ is a lower bound on the
  energy (Proposition~\ref{bound.lp}).  If each task is executed once
  at speed $\max(\fr,\frac{S}{pD})$, since $\fr <
  (1+\varepsilon)\frac{S}{pD}$, then the energy consumption is
  at most at a ratio $\left( 1\!+\!\varepsilon^* \right )^2$ of
  the value of the optimal energy consumption. 
  The bound on the deadline is obtained by applying
  Lemma~\ref{lemma.bigload}. 
\end{proof}

\begin{proposition}
\label{prop.2}
For the problem \indep where each task~$T_i$ is such that
 $w_i<\max(\frac{S}{p},D\fr)$, if $S \leq pD\fr
 \frac{1}{1+\varepsilon^*}$, then there is a $\left( \left(
     1+\varepsilon^* \right)^2, \beta\right)$-approxi\-mation
 algorithm, with $\beta \!=\!
 \max\left(\!2\!-\!\frac{3}{2p+1}, 2\!-\!\frac{p+2}{4p+2}\!\right)$. 
\end{proposition}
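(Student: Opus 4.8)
The plan is to mirror the proof of Theorem~\ref{thm.indep}, but to exploit the stronger load hypothesis $S \le pD\fr\frac{1}{1+\varepsilon^*}$ together with the tailored accuracy $\varepsilon^*$. First I would run the FPTAS \approxchain for \chain on a single processor with deadline $pD$ and parameter $\varepsilon^*$, obtaining an execution speed $f_i^{\text{chain}}$ for each task, and form the rescaled instance $\tilde V$ with weights $\tilde w_i = w_i\frac{\fr}{f_i^{\text{chain}}}$ exactly as in the main proof, so that $\sum_{\tilde T_i\in\tilde V}\tilde w_i/\fr = pD$. The whole argument then splits on whether every task fits at speed \fr within $\beta D$, i.e.\ whether $\tilde w_i \le \beta D\fr$ for all~$i$.

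In the easy subcase (all $\tilde w_i \le \beta D\fr$), \dff together with Lemma~\ref{lemma.bigload} schedules $\tilde V$ within makespan $\beta D$, and the energy is exactly $E_{\text{FPTAS}} \le (1+\varepsilon^*)^2 E_{\text{chain}} \le (1+\varepsilon^*)^2 E_{opt}$, the last inequality coming from Proposition~\ref{bound.chain.improved}; this already gives the claimed $((1+\varepsilon^*)^2,\beta)$ guarantee. The substance is the overflow subcase. Since big tasks were removed and $S\le pD\fr/(1+\varepsilon^*)$ forces $\max(\frac{S}{p},D\fr)=D\fr$, a once-executed task has $\tilde w_i = w_i < D\fr < \beta D\fr$ and cannot overflow; hence any overflowing task is re-executed, and by the common-speed property (Proposition~\ref{prop_WC_fr}) it runs at the single re-execution speed \freex. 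Combining $\tilde w_i/\fr>\beta D$ with the constraint $w_i<\frac{1}{\sqrt2}D\fr$ on re-executed tasks (Proposition~\ref{prop.indep}) yields $\freex < \frac{\fr}{\sqrt2\,\beta}$. Plugging in $\freex=\frac{2X}{pD\fr-S+X}\fr$ (Corollary~\ref{cor.energy.chain}), where $X$ is the re-executed weight chosen by the FPTAS, this rearranges to $X<\frac{pD\fr-S}{2\sqrt2\,\beta-1}=(1-K)\,c(pD\fr-S)$, which is precisely where the constant $K=1-\frac{1}{c(2\beta\sqrt2-1)}$ enters.

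The crux --- and the step I expect to be the main obstacle --- is to turn $X<(1-K)\,c(pD\fr-S)$ into the conclusion that the optimal \chain solution re-executes \emph{every} task, i.e.\ that we are in the regime $S\le\frac{c}{1+c}pD\fr$ of Theorem~\ref{thm.chain}. The idea is to argue by contradiction: if partial re-execution were optimal then the unconstrained optimal weight $c(pD\fr-S)$ would be strictly below~$S$, and, because each remaining task is small, the FPTAS would select a re-executed weight within its tolerance of the best achievable weight near that optimum, hence far above $(1-K)\,c(pD\fr-S)$. The threshold $\varepsilon^*=\frac{1}{\sqrt2\,cpK-1}$ is calibrated exactly so that the energy guarantee $(1+\varepsilon^*)^2$ is incompatible with a chosen weight this small; the delicate points are to convert the energy-ratio bound into a weight (equivalently, speed) lower bound using the convexity of $E(X)$ around its minimum, and to handle the \finf and \fmin corner cases together with the fact that $c(pD\fr-S)$ need not be an exactly achievable subset sum.

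Once all tasks are known to be re-executed, I would finish by invoking Theorem~\ref{thm.chain} in its ``all tasks re-executed'' case (with the $\finf$'s fixed as in \computeVl) to compute, in polynomial time, the exact optimal \chain solution of energy $E_{\text{chain}}\le E_{opt}$. Re-executing each task on two distinct processors at these new, higher speeds and balancing the copies with \dff, one checks via Lemma~\ref{lemma.bigload} that every copy now fits within $\beta D$ (the recomputed speeds dominate the overflowing FPTAS speeds, so no task-copy exceeds $\beta D\fr$), giving $E_{algo}=E_{\text{chain}}\le E_{opt}\le(1+\varepsilon^*)^2E_{opt}$ and makespan at most $\beta D$, which completes the proof.
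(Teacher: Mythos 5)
Your opening moves track the paper faithfully: the split on whether every $\tilde w_i \le \beta D\fr$, the observation that an overflowing task must be re-executed (since $w_i < D\fr$ forces $f_i^{\text{chain}} < \fr$), the bound $\freex < \frac{1}{\sqrt{2}\beta}\fr$, and the rearrangement to $X_{\text{chain}} < (1-K)\,c(pD\fr - S)$ are all exactly the paper's Equation~\eqref{eq.fchain} and the ensuing computation (the paper derives it from the deadline equation rather than the closed form of \freex, but it is the same algebra). The genuine gap is at your self-identified crux, and your proposed way around it misidentifies the mechanism. The paper never argues by contradiction from the $(1+\varepsilon^*)^2$ energy guarantee, and $\varepsilon^*$ is \emph{not} calibrated against the FPTAS tolerance; it enters only through the load hypothesis. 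From $S \le pD\fr\frac{1}{1+\varepsilon^*}$ and $\frac{1}{1+\varepsilon^*} = 1 - \frac{1}{\sqrt{2}cpK}$ one gets $\fr - \frac{S}{pD} \ge \frac{\fr}{\sqrt{2}cpK}$, hence $K X_{\text{opt}} = Kc\,pD\left(\fr - \frac{S}{pD}\right) \ge \frac{1}{\sqrt{2}}D\fr$, and therefore $X_{\text{opt}} - X_{\text{chain}} > K X_{\text{opt}} \ge \frac{1}{\sqrt{2}}D\fr$. Since every task eligible for re-execution weighs strictly less than $\frac{1}{\sqrt{2}}D\fr$, a gap this large means that if any eligible task were left out, adding it would keep the selected weight below $X_{\text{opt}}$ and improve the solution; so \emph{all} eligible tasks are re-executed, their total weight is $X_{\text{chain}} < X_{\text{opt}}$, and we are in the all-re-executed regime of Theorem~\ref{thm.chain}. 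Your alternative --- converting the energy-ratio guarantee into a weight lower bound via convexity of $E(X)$ --- is left unexecuted (you list the ``delicate points'' without resolving them), and its premise is wrong: nothing in the FPTAS guarantee prevents the selected weight from sitting far below $c(pD\fr - S)$ when the achievable subset sums simply do not reach that value, which is precisely the situation the load hypothesis is there to exclude.

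The concluding step is also under-justified as you state it. The paper fits the copies within $\beta D$ by \emph{redefining} the floors $\finf = \frac{w_i}{1.9D}$, checking $\finf < \frac{1}{1.9\sqrt{2}}\fr < \frac{2c}{1+c}\fr$ so that Theorem~\ref{thm.chain} (case $p=1$, via \computeVl) still applies, whence every recomputed speed satisfies $\tilde{f_i}^{\text{chain}} \ge \finf$ and thus $\frac{w_i}{\tilde{f_i}^{\text{chain}}} \le 1.9D \le \beta D$ (using $p \ge 24 \Rightarrow \beta \ge 1.9$), after which Lemma~\ref{lemma.bigload} schedules everything; the floors do not break the lower-bound property because any feasible solution of \indep runs each copy at speed at least $\frac{w_i}{D} > \finf$. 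Your justification --- ``the recomputed speeds dominate the overflowing FPTAS speeds'' --- does not suffice: dominating speeds that were below $\frac{\fr}{\sqrt{2}\beta}$ yields no per-task bound of the form $\frac{w_i}{\tilde{f_i}^{\text{chain}}} \le \beta D$, which is exactly what fitting within the relaxed deadline requires.
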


\begin{proof}
  Similarly to the original algorithm, we use the FPTAS and we obtain
  a $\left( \left( 1+\varepsilon^* \right)^2,
    \beta\right)$-approxi\-mation algorithm unless there is a task~$T_i$ such
  that $\frac{\tilde{w_i}}{\fr} > \beta D$, and hence
  $\frac{w_i}{f_i^{\text{chain}}} > \beta D$. Since $w_i<D\fr$ (larger
  tasks have been processed in the first step of the algorithm), we
  have $f_i^{\text{chain}}<\fr$.  This means that $T_i$ belongs to the
  set of the tasks that are re-executed by \approxchain.  Hence, since
  we enforced an additional constraint, we have
  $w_i<\frac{1}{\sqrt{2}}D\fr$.  Finally,
\begin{equation}
    \label{eq.fchain}
f^{\text{chain}}_i = \freex < \frac{w_i}{\beta D}< \frac{1}{\sqrt{2} \beta}\fr.
\end{equation}



Let $X_{\text{chain}}$ be the total weight of the re-executed tasks
($X_1$ or $X_2$ in \approxchain), and let $X_{\text{opt}}=c(pD\fr -
S)$ be the optimal weight to solve \chain with one processor. 
We compute $X_{\text{opt}} - X_{\text{chain}}$.  By
definition of $\freex$ (Corollary~\ref{cor.energy.chain}), the optimal
speed at which each re-execution should occur, we have:
\[pD = \frac{S-X_{\text{chain}}}{\fr} +
\frac{2X_{\text{chain}}}{\freex} = \frac{S-X_{\text{opt}}}{\fr} +
\frac{2X_{\text{opt}}}{f_{\text{opt}}}, \]
where $f_{\text{opt}} = \frac{2c}{1+c} \fr$ (Corollary~\ref{cor.energy.chain} 
applied to $X_{\text{opt}}$). We now express $X_{\text{opt}} -
X_{\text{chain}}$: 
$$\left ( \frac{2}{\freex} \!-\! \frac{1}{\fr} \right )\! X_{\text{chain}} =\!
\left ( 2\frac{1+c}{2c}\frac{1}{\fr} \!-\! \frac{1}{\fr} \right ) \!
X_{\text{opt}},$$
and therefore 
$X_{\text{chain}} = \frac{\freex}{c(2\fr - \freex)} X_{\text{opt}}$,
and finally $X_{\text{opt}} - X_{\text{chain}} =\left (1-
  \frac{\freex}{c(2\fr -\freex)} \right) X_{\text{opt}}$, 
that is minimized when $\freex$ is maximized. Applying the upper bound
on $\freex$ from Equation~\eqref{eq.fchain}, we obtain:
\begin{equation*}
    \label{eq.xchain}
X_{\text{opt}} - X_{\text{chain}} > \left (1- \frac{1}{c(2\beta
    \sqrt{2} -1)} \right) X_{\text{opt}} = K\times X_{\text{opt}}\; . 
\end{equation*}

Since $\frac{S}{pD}\leq  \frac{1}{1+\varepsilon^*} \fr$, we have
 $\frac{S}{pD}\leq \left(1-\frac{1}{\sqrt{2} cpK}\right) \fr$, and
$\fr - \frac{S}{pD} \geq \frac{\fr}{\sqrt{2} cpK}$. Since
$X_{\text{opt}} = c(pD\fr -S)$ and $K>0$, we obtain
%
%
$K \times X_{\text{opt}}\geq \frac{1}{\sqrt{2}}D\fr$, 
and therefore we have 
$X_{\text{opt}} - X_{\text{chain}} >
\frac{1}{\sqrt{2}}D\fr$. This means that each task that can be
re-executed in any solution to \indep is indeed re-executed in the
solution given by \approxchain, since all these tasks have a weight
lower than $\frac{1}{\sqrt{2}}D\fr$. Since $X_{\text{opt}}$ is greater
than the total weight of the tasks that can be re-executed, we can use
Theorem~\ref{thm.chain} in the case $p=1$, on the subset of
tasks~$T_i$ such that $w_i\leq\frac{1}{\sqrt{2}}D\fr$. The other tasks
are executed once at speed~$\fr$. We define $\finf = \frac{w_i}{1.9
  D}$, so that $\finf < \frac{1}{1.9\sqrt{2}} \fr < \frac{2c}{1+c}
\fr$ and we can apply Theorem~\ref{thm.chain}.  Then, in polynomial
time, we have the optimal solution with new execution speeds:
$\tilde{f_i}^{\text{chain}}$.  Furthermore for each task~$T_i$,
necessarily $$\frac{w_i}{\tilde{f_i}^{\text{chain}}} \leq
\frac{w_i}{\finf} = 1.9 D.$$

Note that since $p\geq 24$, we have $\beta \geq 1.9$, and
$\frac{w_i}{\tilde{f_i}^{\text{chain}}} \leq \beta D$. We can
therefore schedule the new tasks~$\tilde{T_i}$ within the deadline
relaxation using \dff, as a direct consequence of
Lemma~\ref{lemma.bigload}. 
\end{proof}

We can conclude by stating that thanks to
Propositions~\ref{prop.epsilon1} and~\ref{prop.2}, since
$\varepsilon^*$ is in $\Theta(\frac{1}{p})$ and $\beta$ is in
$2-\Theta(\frac{1}{p})$, this algorithm is a
$(1+\Theta(\frac{1}{p}),2-\Theta(\frac{1}{p}))$-approximation. 
Indeed, $\varepsilon^*<1$ and therefore
$(1+\varepsilon^*)^2<1+3\varepsilon^*$.

Furthermore, the algorithm is polynomial in the size of the input and
in $\frac{1}{\varepsilon^*}$.

\end{document}